\documentclass[sigconf,9pt]{acmart}

\usepackage[subtle,title=tight]{savetrees} 
\usepackage[english]{babel}
\usepackage{blindtext}

\setcopyright{cc}
\setcctype{by}
\copyrightyear{2026}
\acmYear{2026}
\acmConference[SIGCOMM '26]{ACM SIGCOMM 2026 Conference}{August 17--21, 2026}{Denver, CO, USA}
\acmBooktitle{ACM SIGCOMM 2026 Conference (SIGCOMM '26), August 17--21, 2026, Denver, CO, USA}
\acmDOI{10.1145/3789240.3829150}\acmISBN{979-8-4007-2467-1/26/08}

\usepackage[T1]{fontenc}
\usepackage{newtxtext,newtxmath}
\usepackage{float}
\usepackage{tikz}
\usepackage{amsmath}
\usepackage{xspace}
\usepackage{tabularx}

\usepackage{booktabs}
\usepackage{graphicx}
\usepackage{subcaption}

\usepackage{algorithm}
\usepackage{algorithmicx}
\usepackage{algpseudocode}
\algrenewcommand\algorithmicindent{1em}
\usepackage{booktabs}
\usepackage{siunitx}
\usepackage{comment}
\usepackage{etoolbox}
\usepackage{makecell}

\usepackage[newfloat]{minted}

\usepackage[table,xcdraw]{xcolor}
\usepackage{tikz}
\newcommand*\circled[1]{\tikz[baseline=(char.base)]{
            \node[shape=circle,draw,inner sep=0.5pt,font=\small] (char) {#1};}}
\usetikzlibrary{fit, shapes.geometric, arrows}
\usepackage{nicematrix}
\usepackage{caption}

\definecolor{bgcolor}{rgb}{0.99,0.99,0.99}

\usepackage{listings}

\definecolor{codegray}{rgb}{0.5,0.5,0.5}
\definecolor{codegreen}{rgb}{0,0.6,0}
\definecolor{backcolor}{rgb}{0.99,0.99,0.99}
\makeatletter

\usetikzlibrary{fit}
\usepackage{nicematrix}
\usepackage{caption}

\usepackage{enumitem}
\usepackage{multirow}
\setitemize{noitemsep,topsep=0pt,parsep=0pt,partopsep=0pt,itemindent=.5em,leftmargin=0.5em}
\setenumerate{noitemsep,topsep=0pt,parsep=0pt,partopsep=0pt,itemindent=.5em,leftmargin=0.5em}

\newcommand{\para}[1]{%
  \par\addvspace{0.0\baselineskip}%
  \noindent\textbf{#1}\hspace{0.6em}%
}

\newcommand{\captionfonts}{\bf \small}
\usepackage[normalem]{ulem}

\makeatletter
\def\thm@space@setup{%
  \thm@preskip=2pt
  \thm@postskip=2pt
}
\makeatother
\usepackage{etoolbox}

\makeatletter  
\long\def\@makecaption#1#2{%
	\vskip\abovecaptionskip
	\sbox\@tempboxa{{\captionfonts #1: #2}}%
	\ifdim \wd\@tempboxa >\hsize
	{\captionfonts #1: #2\par}
	\else
	\hbox to\hsize{\hfil\box\@tempboxa\hfil}%
	\fi
	\vskip\belowcaptionskip}
\makeatother   

\usepackage{pifont}
\newcommand{\xmark}{\ding{55}}%

\newcommand{\showComments}{yes}
\newcommand{\note}[2]{
    \ifthenelse{\equal{\showComments}{yes}}{\textcolor{#1}{#2}}{}
}

\newcommand{\name}{MonkeyTree\xspace}
\newcommand{\sys}{MonkeyTree\xspace}

\newcommand{\exclude}[1]{}

\newtheoremstyle{compact}
 {1.5pt}      
 {1.5pt}      
 {\itshape} 
 {\parindent}         
 {\scshape}
 {.}        
 {.5em}     
 {}         
\theoremstyle{compact}
\newtheorem{theorem}{Theorem}
\newtheorem{corollary}{Corollary}

\makeatletter
\newcommand\footnoteref[1]{\protected@xdef\@thefnmark{\ref{#1}}\@footnotemark}
\makeatother

\hypersetup{
      breaklinks=true,
      colorlinks=true, 
      urlcolor=blue  
    }
\usepackage{xurl}
\begin{document}

\title{\name: Near-Minimal Congestion\\for Multi-tenant Training via Migration}

\author{Anton A. Zabreyko$^\dagger$ \quad
        Weiyang Wang$^\dagger$ \quad
        Manya Ghobadi$^{\dagger*}$}
\affiliation{%
  \institution{$^\dagger$MIT \qquad $^*$Systalyze}
  \country{}
}

\newcommand{\paperauthors}{Anton A. Zabreyko, Weiyang Wang, Manya Ghobadi}

\begin{abstract}
We present \sys, the first system to mitigate network congestion in multi-tenant GPU clusters through job-migration based \textit{defragmentation} rather than network-layer techniques. As cloud operators co-locate ML training jobs on shared, oversubscribed networks, congestion degrades training throughput for over a third of jobs. Prior approaches either rely on routing and flow scheduling—which we show have fundamental limits when traffic exceeds capacity, or require costly full-bisection bandwidth topologies with packet spraying.

\sys exploits characteristics of ML training traffic: ring-based collectives generate exactly one cross-rack flow per rack a job spans, making congestion-free placements \textit{achievable}. The sparse constraint structure admits abundant valid configurations, making them \textit{easy to reach} with few migrations. Once reached, low fragmentation is \textit{self-reinforcing}, as new arrivals disturb only a few racks. \sys formulates defragmentation as an integer linear program that minimizes worker movements, subject to per-rack fragmentation bounds. We prove a tight bound showing any placement can be defragmented to at most two cross-rack fragments per ToR, and extend the formulation to hybrid parallelism with multiple rings per server. Migration is implemented via in-memory checkpoint-and-restore over RDMA, incurring only $9.02$ seconds of system overhead end-to-end per worker. We evaluate \sys using a custom simulator modeling clusters of up to 2,048 H200 GPUs and prototype on a five-node A100 testbed. \sys improves average job completion time by 14\% over the next best baseline on a cluster of 1,024 GPUs with a 4:1 oversubscription. With a high 16:1 oversubscription ratio and 2,048 GPUs, \name keeps p99 job completion time within 5\% of ideal. 
\end{abstract} 

\begin{CCSXML}
<ccs2012>
<concept>
<concept_id>10003033.10003106.10003110</concept_id>
<concept_desc>Networks~Data center networks</concept_desc>
<concept_significance>500</concept_significance>
</concept>
<concept>
<concept_id>10003033.10003068.10003073.10003074</concept_id>
<concept_desc>Networks~Network resources allocation</concept_desc>
<concept_significance>500</concept_significance>
</concept>
</ccs2012>
\end{CCSXML}

\ccsdesc[500]{Networks~Data center networks}
\ccsdesc[500]{Networks~Network resources allocation}

\keywords{datacenter networks, GPU clusters, machine learning training, multi-tenancy, network congestion, migration, collective communication}
\settopmatter{printacmref=true, printccs=true, printfolios=false}

\renewcommand{\shortauthors}{Zabreyko et al.}
\renewcommand{\shorttitle}{\name}
\renewcommand{\authors}{\paperauthors}

\maketitle

\section{Introduction}

Multi-tenant GPU clusters have become a common deployment model for machine learning (ML) training workloads in the cloud~\cite{mlaas, crux, sglb, harmonics, cassini, antman}. In such environments, network congestion is a persistent challenge: concurrent jobs compete for shared bandwidth, degrading training throughput of 36.1\% of the jobs~\cite{crux}. The problem is compounded by the fact that operators traditionally build oversubscribed datacenter networks to reduce cost, further limiting available capacity during traffic bursts~\cite{meta_rdma, ncclx, hpn}. \looseness=-1

Prior work has tackled this problem from two directions. The first is to leverage routing~\cite{crux, sglb, foresight, harmonics} and flow-scheduling~\cite{cassini, mltcp} techniques to mitigate congestion by selecting efficient paths and time-multiplexing network resources across jobs. However, we demonstrate that these approaches have fundamental limits: when aggregate traffic exceeds network capacity, flow collisions become inevitable even with optimal routing and flow scheduling~(\S\ref{sec:motivation}). At the other extreme, industry leaders are pushing packet spraying over full-bisection bandwidth topologies as the ultimate solution~\cite{cisco-nvidia-spectrumx, cisco_packet, broadcom_packet, ethernet_packet}. This network architecture provides near-ideal performance, but at substantial cost. Achieving full-bisection bandwidth requires a large number of network switches, and packet spraying demands specialized switches and NICs~\cite{supernic}. Cloud providers must therefore trade performance against infrastructure expense.

We analyze the traffic patterns of multi-tenant DNN training jobs and show that traffic is dominated by \textit{sparse} elephant flows, in which each GPU communicates with only one other GPU for the majority of its traffic. This sparsity suggests network congestion is \textit{rare}, even on oversubscribed networks. In practice, however, jobs are often \textit{fragmented} across servers in different racks. Such fragmentation persists under high load despite locality-aware job schedulers~(\S\ref{sec:motivation}), causing the flow count on ToR-to-spine links to spike and increasing network load by 4.85$\times$ relative to optimal placement~(\S\ref{sec:eval_frag}). This leads to congestion despite sparse per-job traffic. Existing mitigation approaches address this only through networking techniques~\cite{crux,cassini,mltcp,sglb,harmonics,foresight}, but do not tackle the root cause, leaving clusters persistently fragmented. 

We argue that the most effective way to address congestion in multitenant GPU clusters lies not in optimizing network transfers but in \textit{job placement}. Based on this insight, we propose \sys, a system that continuously maintains a low-fragmentation state through small, proactive job \textit{migrations}—moving workers of the same job to concentrate them on the same racks. This \textit{defragmentation} reduces competing network traffic at its root cause, rather than reacting to congestion after it arises.

Using job migration to reduce congestion in GPU clusters remains unexplored, a legacy of CPU-centric datacenters. However, CPU workloads generate arbitrary traffic matrices, making congestion-free placements both ill-defined and rare. Reaching a defragmented placement that minimizes network traffic requires migrating a large fraction of workers, yet contention persists because no placement isolates traffic from workers entirely~(\S\ref{sec:cpu}). We argue that GPU training traffic is fundamentally different, and migration-based defragmentation is not only practical but highly effective for multi-tenant ML clusters, requiring only a small number of targeted migrations to eliminate congestion. \sys rests on three key insights about the structure of ML training traffic: \looseness=-1

First, congestion-free states are \textit{achievable} in GPU clusters.  The majority of ML training traffic consists of ring-based collectives that generate exactly one flow per rack a job spans. This property transforms the problem: rather than minimizing network traffic, we need only keep each rack's outgoing flows below its uplink count. Any placement satisfying this constraint achieves full path isolation.

Secondly, congestion-free states are \textit{abundant}. Unlike CPU workloads, where low-traffic placements are rare, the sparse constraint structure for GPU training admits many valid congestion-free configurations. Worker ordering and placement within racks do not affect traffic patterns, and each rack's constraint is independent. These degrees of freedom make congestion-free states easy to reach from any starting point.

Finally, maintaining low fragmentation is \textit{self-reinforcing}. Once a cluster reaches a valid state, maintaining it requires minimal effort. A new job violates constraints at only a few racks, and because valid placements are abundant, a small number of migrations restores compliance. \looseness=-1

Building on these insights, \sys employs a centralized controller and distributed daemons that operate alongside existing job schedulers without modifying their placement logic~(\S\ref{sec:design}). When jobs arrive, the scheduler assigns GPUs as usual. \sys monitors the resulting placement and checks whether any rack's fragmentation exceeds a threshold derived from its uplink count. If so, the controller invokes a migration solver to compute the minimum migrations needed to restore a congestion-free state. Once fragmentation is bound, the controller computes routes using a \textit{Perfect routing} scheme we designed~(\S\ref{sec:perfect}), that assigns each DP flow a dedicated uplink, achieving full path isolation. Daemons on each server execute migrations and configure local routing rules. By exploiting ML traffic sparsity and the self-reinforcing nature of low fragmentation, \sys keeps migration overhead to a minimum, typically requiring fewer than two moves per job arrival.

The core algorithmic contribution is the migration solver. We formulate the problem as an integer linear program (ILP): given a current placement, find a new placement that meets the fragmentation target while minimizing worker movements~(\S\ref{sec:ring}). We first solve this for workloads with a single ring-based collective and prove a tight bound: any placement can be defragmented into at most 2 cross-rack fragments per ToR~(\S\ref{sec:fragmentation}). This bound is both achievable and necessary in the worst case. Using insights from the proof, we extend the formulation to hybrid parallelism, where each server runs multiple workers forming parallel communication rings~(\S\ref{sec:model_parallelism}).

Realizing this design requires addressing two challenges. The first is ILP scalability: integer programs have worst-case exponential complexity. However, our analysis shows that the ILP solve time scales with the number of moves required~(\S\ref{sec:ring}). Combined with the abundance and self-reinforcing nature of low-fragmentation cluster states, the number of moves required is two or less 80\% of the time. This keeps solve times low in practice. 
For a cluster with $1,024$ GPUs, solving the ILP takes $0.89$ seconds on average, triggered on average every $102$ minutes at 80\% load~(\S\ref{sec:ilp_results}).

The second challenge is migration overhead. Live migration is considered complex and disruptive, but modern DNN training makes it surprisingly simple. Training frameworks already provide checkpointing to store and restore state for fault tolerance, and we build migration directly on this infrastructure. We implement in-memory checkpoint-and-restore over RDMA using PyTorch's Distributed Checkpointing library~\cite{pytorch_dcp}. 
 
To evaluate \name, we build a custom Rust-based simulator for modeling multi-tenant distributed ML training, modeling clusters of up to $2,048$ H200 GPUs and a $400$ Gbps network. We implement several models, including variants of GPT-3~\cite{gpt3, megatron}, GPT-OSS~\cite{gpt_oss_20b,gpt_oss_120b}, Llama2~\cite{llama2}, and Llama3~\cite{llama3}~(\S\ref{sec:sim_setup}). We also prototype \name in a testbed with five nodes, each equipped with one $A100$ GPU and a 100~Gbps RDMA NIC, and perform an end-to-end migration benchmark. In simulation, we demonstrate that on a cluster of $1,024$ GPUs \name improves average job completion time by $14$\% over the next best baseline~(\S\ref{sec:eval_job_isolation}). On a cluster of $2,048$ GPUs with an oversubscription ratio of 16:1, we keep the p99 job-completion slowdown to just 5\%~(\S\ref{sec:eval_slo}). Our testbed prototypes support efficient migration, with $9.02$~s of system overhead from the start of migration to resuming training~(\S\ref{sec:prototype}). To the best of our knowledge, this is the first system to mitigate congestion between competing ML training jobs using migration as the primary remedy. The MonkeyTree codebase is open source~\cite{monkeytree_repo}.

This work does not raise any ethical issues.

\section{Multi-Tenant Training and Congestion} \label{sec:motivation}
\begin{figure}[t]
  \centering
  \includegraphics[width=.8\columnwidth]{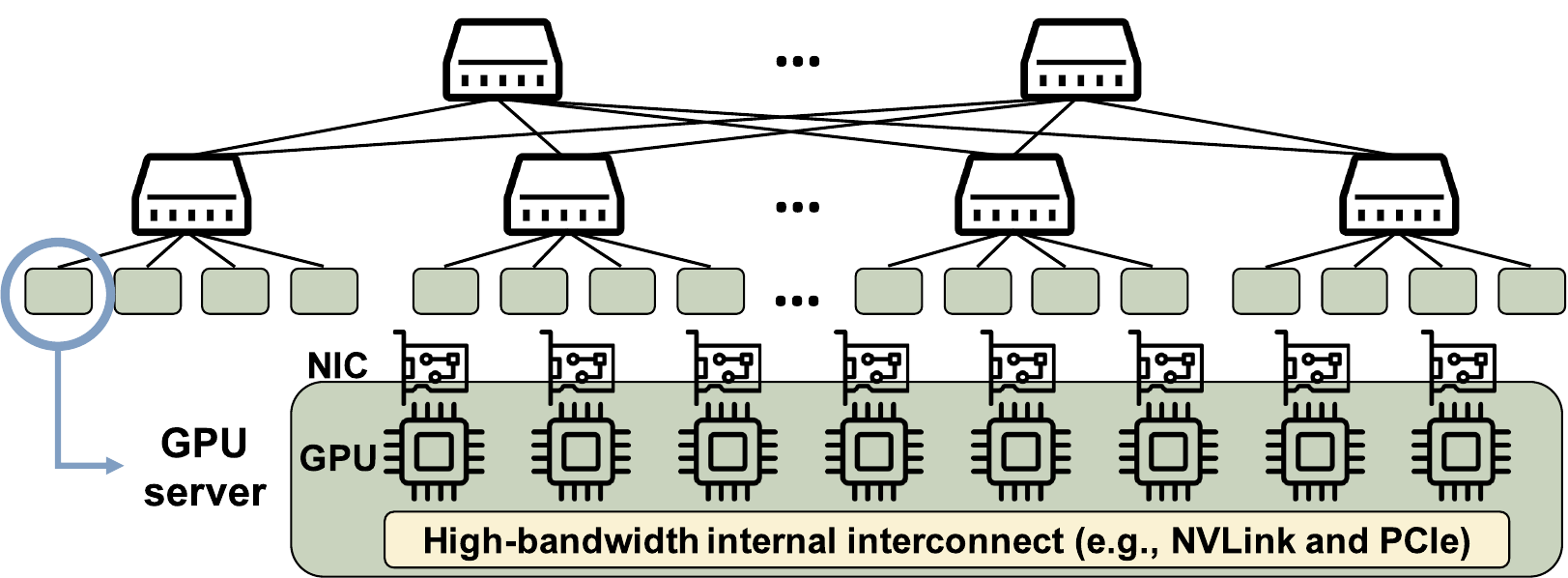}
  \caption{A multitenant GPU cluster. }
  \label{fig:mt_cluster}
\end{figure}

 \begin{figure*}[t]
   \centering
    \captionsetup[subfigure]{aboveskip=-1pt,belowskip=-1pt}
    \begin{minipage}{0.62\textwidth}
    \centering
    \includegraphics[width=.88\columnwidth]{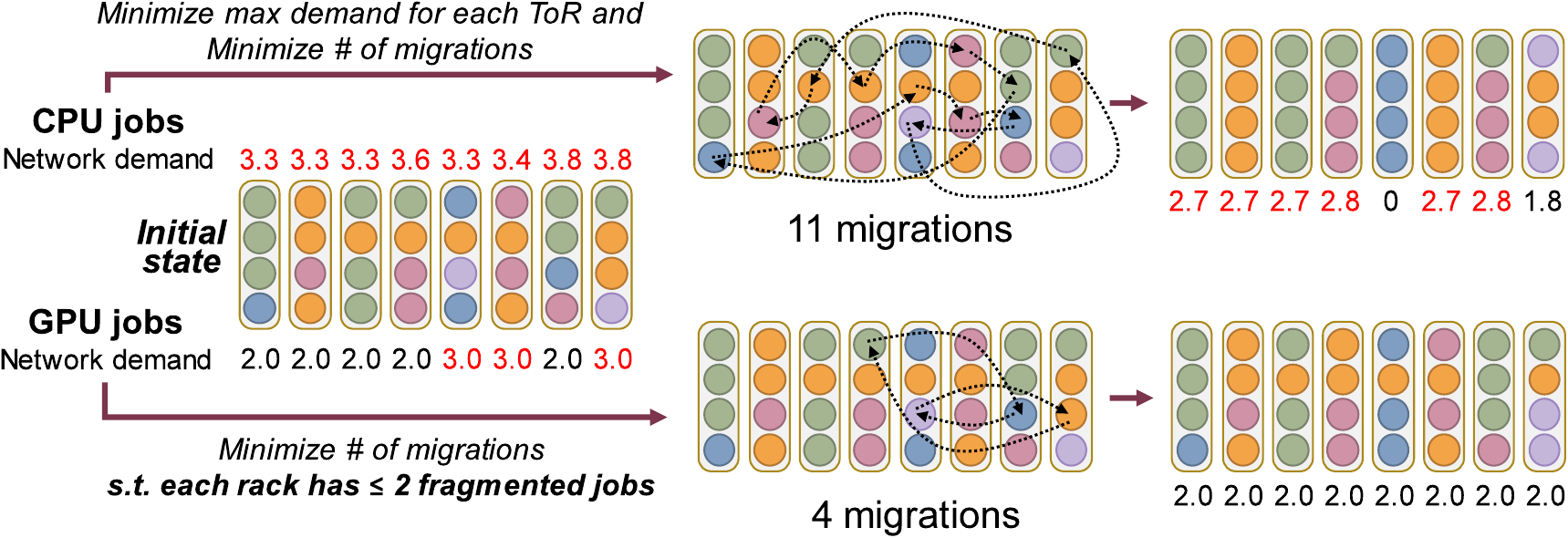}
    \caption{Migration and network congestion in shared CPU vs. GPU clusters. \normalfont Different colors represent workers in different jobs; each ToR switch supports two units of demand. The CPU cluster (top) requires 11 migrations, yet inter-job congestion persists on racks 4 and 7. The GPU cluster (bottom) needs only 4 migrations to eliminate congestion.}
    \label{fig:mig_example}
  \end{minipage}
  \hfill
  \begin{minipage}{0.35\textwidth}
  \centering
  \includegraphics[width=.9\columnwidth]{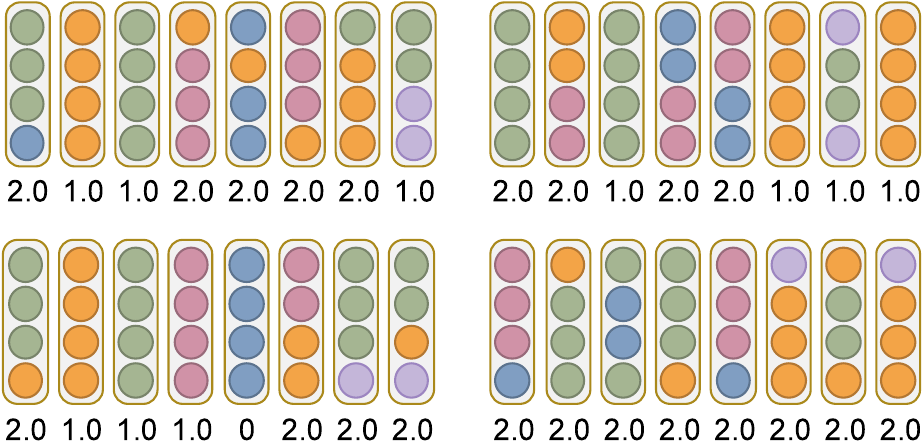}
  \caption{Sample congestion-free states for GPU clusters, for jobs shown in Figure~\ref{fig:mig_example}. \normalfont Despite different placements, all states have two or fewer fragmented jobs, thereby generating at most two units of demand per rack.}
  \label{fig:mig_ubi}
\end{minipage}  
\vspace*{-3mm}
\end{figure*}

Multi-tenant training is ubiquitous in today's cloud environment~\cite{crux, sglb, harmonics, cassini, mltcp, foresight}. A typical GPU cluster consists of GPU servers interconnected by a hierarchical network, shown in Figure~\ref{fig:mt_cluster}. Each GPU server contains multiple GPUs, each with a dedicated NIC, as well as an internal high-bandwidth interconnect shown at the bottom of Figure~\ref{fig:mt_cluster}. In these shared GPU clusters, each training job occupies one or more GPU servers, making the network the primary shared resource. Unfortunately, network congestion significantly degrades training performance, with Alibaba reporting up to 36.1\% of the jobs experience throughput reduction due to congestion~\cite{crux}.

\subsection{Prior Solutions}
Two directions have emerged to address this. The first invests in network infrastructure: packet spraying over full-bisection bandwidth topologies delivers near-ideal performance but requires expensive specialized hardware such as Nvidia's SuperNICs~\cite{supernic, spectrum_x}. Many deployments, therefore, accept oversubscribed networks to reduce cost~\cite{meta_rdma,ncclx,hpn}. The second direction employs software techniques, such as routing~\cite{crux, sglb} and flow scheduling~\cite{cassini, mltcp}, to mitigate congestion in oversubscribed networks. We evaluate Crux as an example of these approaches, on a cluster with $1,024$ GPUs and 4:1 network oversubscription ratio. We observe a $26\%$ average slowdown despite the mitigation: when aggregate traffic exceeds network capacity, flow collisions become inevitable regardless of strategy.  Flow scheduling approaches like Cassini attempt to overlap one job's communication with another's computation, but this strategy is fragile because it requires well-matched job characteristics and becomes unstable when iteration times differ by even a slight amount~\cite{mltcp_full}.

Both approaches overlook a more fundamental solution: reducing traffic entering the network in the first place. A job whose workers reside entirely within a single rack generates no traffic beyond the ToR layer, whereas a \textit{fragmented} job spanning multiple racks requires communication through the spine. When too many fragmented jobs share a rack, the aggregate uplink demand exceeds the ToR's uplink capacity, leading to congestion. Job scheduling doesn't resolve fragmentation: our evaluation shows that in a cluster of $1,024$ GPUs across $16$ racks, up to $93$\% of ToRs carry more uplink demand than their capacity even under a locality-optimized scheduler. This happens in production systems at scale as well. We analyzed Alibaba's report and found that almost 50\% of their jobs are fragmented~\cite{crux}. Fragmentation is not merely a heuristic artifact: cluster placement is an online multidimensional bin-packing problem with unknown future arrivals. Offline placement is NP-hard, and online lower bounds rule out optimality. Thus, fragmentation is inevitable even with a well-designed scheduler~\cite{bin_packing}. The natural remedy is \textit{defragmentation} through job \textit{migration}, yet prior work on network contention in GPU clusters ignores it, a legacy of CPU-centric multitenant clusters where migration was impractical. 

\subsection{Migration Challenges in CPU Workloads} \label{sec:cpu}
Migration-based approaches for CPU workloads fall into two categories.
The first addresses resource sharing at the server level, where CPU cores and memory are the primary contended resources~\cite{nsdi_live_migration_core_1, migration_core_2, sandpiper}. Early GPU migration works, such as Gandiva~\cite{gandiva}, adopted the same granularity, splitting GPUs across jobs on a server. This does not apply to our setting, where the network above between ToR and spine switches is the primary contended resource~\cite{crux, meta_imc}. 
\looseness=-1

The second jointly optimizes placement and routing to reduce network contention~\cite{joint_migration_routing}. However, CPU workloads generate arbitrary traffic patterns, yielding a multi-objective problem: minimize network traffic while minimizing migrations. This formulation has fundamental limitations. Consider five jobs (color-coded) on 32 servers across eight racks in Figure~\ref{fig:mig_example}. Each ToR has two uplinks supporting two units of demand, and each node sends one unit of traffic uniformly to others in its job. The initial placement is fragmented, with racks demanding up to 3.8 units (Figure~\ref{fig:mig_example}, top-left). The placement that minimizes peak per-ToR traffic requires at least 11 migrations (Figure~\ref{fig:mig_example}, top), yet congestion persists: only two racks stay within uplink capacity, and racks 4 and 7 still exhibit inter-job contention between the green and pink jobs. \looseness=-1

This illustrates two problems. First, eliminating congestion is impossible: no placement achieves zero inter-job contention in this example, and alternative solutions simply shift congestion to different jobs. Without a policy specifying job priorities, there is no clear optimum. Second, good states are sparse: diverse CPU traffic patterns make low-congestion placements rare targets, requiring one-third of the cluster to migrate in this example alone—a ratio that worsens with scale. \looseness=-1

In summary, migration for CPU workloads faces fundamental obstacles. Does this change for GPU workloads?
\enlargethispage{0.5\baselineskip}
\section{Tractability of GPU Migration} \label{sec:gpu_mig}
Distributed training jobs consist of computation and communication operations. Communication arises from parallelizing model training across GPUs, creating data dependencies that manifest as network transfers. Modern LLM training combines multiple parallelization strategies, each generating distinct traffic patterns.

Table~\ref{table:traffic_size} shows traffic generated by one iteration of Llama3-70B and GPT3-175B on 128 GPUs with a local batch size of 16. Tensor parallel (TP) traffic is the largest but remains confined to high-bandwidth interconnects such as NVLink within a server. Among traffic entering the datacenter network, data-parallel (DP) collectives dominate: 7.5$\times$ and 29.4$\times$ larger than pipeline-parallel (PP) traffic for Llama and GPT, respectively. Congestion, therefore, manifests most significantly when DP collectives from different jobs collide. Hence, we define a placement to be a \textit{congestion-free state} if no DP traffic across jobs competes in the network. To minimize network congestion, we focus on managing DP traffic.

DP replicates the model across GPUs, with each GPU processing a different data batch. Fully Sharded Data Parallelism (FSDP) extends DP by sharding parameters across GPUs. Both require synchronizing gradients or parameters at the end of each training iteration through collective communication. These collectives generate traffic proportional to model size with the same volume and pattern, so we treat them identically and refer to both as DP traffic. When combined with other parallelisms such as PP and TP, only a subset of GPUs shares the same model replica; we call this subset a \textit{replica group}, and each performs its own collective independently.

\para{Ring-based collectives.} We focus on ring-based collective algorithms, which are bandwidth-optimal and generate the sparsest traffic pattern~\cite{ncclx}. When a replica group spans multiple racks, each ToR generates one flow entering and leaving, regardless of how many workers reside in that rack or their ordering. We refer to such traffic as DP-flows throughout the rest of the paper. This sparsity fundamentally changes the property of congestion-free states in shared GPU clusters. \looseness=-1

\begin{table}[t]
    \centering
    \scriptsize
    \setlength{\tabcolsep}{3.5pt}  
    \begin{tabular}{@{}c|cccc|cccc@{}}
    \toprule
    \textbf{Model}       & \multicolumn{4}{c|}{\textbf{Llama3-70B}}                           & \multicolumn{4}{c}{\textbf{GPT3-175B}}                          \\ \midrule
    \textbf{Parallelism} & \textbf{Deg.} & \textbf{Total} & \textbf{\% of DP} & \textbf{Net?} & \textbf{Deg.} & \textbf{Total}  & \textbf{\% of DP} & \textbf{Net?} \\ \midrule
    \textbf{TP}          &       8       &  35.8~TB       &  426\%           & N             & 8              & 68.8~TB         &  3276\%        & N             \\
    \textbf{DP/FSDP}     &       4       &  840~GB        &  100\%           & Y             & 4              & 2.1~TB          &  100\%         & Y             \\
    \textbf{PP}          &       4       &  48~GB         &  5.7\%           & Y             & 4              & 72~GB           &  3.4\%         & Y                \\ \bottomrule
    \end{tabular}
    \caption{Traffic composition of common LLMs with different parallelism and whether the traffic is carried in the network.}
    \label{table:traffic_size}
\end{table} 

\subsection{Congestion-Free States are Achievable}
Consider again the example in Figure~\ref{fig:mig_example}, now with GPU jobs. We assume each server has one GPU and runs training with DP-only (i.e., one replica group per job); we discuss the multi-GPU and hybrid parallelism cases in Section~\ref{sec:model_parallelism}. 
Since DP traffic dominates network congestion, we analyze only DP synchronization traffic, where each GPU sends one unit of demand to its neighboring node. The total network demand generated by each job is the same across CPU and GPU jobs; however, unlike the CPU case, the traffic matrix is \textit{sparse}: ring-based collectives generate exactly one flow per ToR per fragmented job. As long as each ToR contains at most two fragmented jobs, the number of outgoing flows stays below the number of uplinks.

This sparsity resolves the ill-defined optimization problem from Section~\ref{sec:cpu}. In particular, there exist states in which the traffic leaving each ToR does not exceed its uplink capacity (bottom-left of Figure~\ref{fig:mig_example}). We define such states as \textit{congestion-free}, since this condition allows each DP flow to be assigned a dedicated uplink, achieving \textit{path isolation}. Under minimal uplink provisioning, such states always exist (\S\ref{sec:fragmentation}). For CPU workloads, operators must balance network contention against migration overhead~\cite{joint_migration_routing, joint_migration_routing_2, joint_placement_routing_3}. In contrast, for ML workloads this tradeoff disappears: because congestion-free states are always attainable, they can be enforced as a constraint, allowing the optimization to focus solely on minimizing migrations.

Now, we turn to minimizing migrations: given a contended state, what is the minimum number of worker movements to reach any congestion-free state? The bottom of Figure~\ref{fig:mig_example} shows such placement with the minimum amount of migration, requiring only 4 compared to 11 in the CPU case. Such a reduction stems from another property of GPU jobs: congestion-free placements are readily available.

\subsection{Congestion-Free States Are Abundant}
The achievable nature of congestion-free states implies another useful property; \textit{any} state where the number of flows leaving each ToR does not exceed its uplink count is equally good, and we do not need to find a single optimal arrangement. For instance, Figure~\ref{fig:mig_ubi} shows four other congestion-free placements for GPU workloads, in addition to the bottom right placement in Figure~\ref{fig:mig_example}. Notice that even seemingly fragmented placements, like the bottom-right one, are congestion-free.

This flexibility compounds in several ways. First, workers within a replica group are order-insensitive: they form a ring in any order through a simple re-assignment of ranks for communication, and still produce the same sparse traffic pattern in the network. Second, the constraint is local to each ToR, as satisfying the uplink bound at one ToR is independent of other ToRs. Consequently, unlike CPU workloads, where low-congestion placements are rare and expensive to reach, congestion-free states in a shared GPU cluster are plentiful. Hence, we can always reach one from any contended state through a small number of migrations (§\ref{sec:eval}).

\subsection{Congestion-Free States are Self-Reinforcing}
Once a cluster reaches a congestion-free state, maintaining it requires only incremental effort. A single new job spans a limited number of ToRs, so its arrival can violate the uplink constraint at only those few racks—a small, localized perturbation to an otherwise valid configuration. Because valid placements are abundant, restoring compliance never requires a far-reaching search: a small number of migrations among the affected racks suffices. This creates a virtuous cycle: few migrations mean low overhead, low overhead makes proactive maintenance practical, and a well-maintained cluster presents only simple, local violations to correct. Rather than gradually drifting into a highly contended state that demands expensive cluster-wide reorganization, the system naturally remains in the neighborhood of valid configurations.

\section{\name System Design}\label{sec:design}
\begin{figure}[t]
    \centering
    \includegraphics[width=.88\linewidth]{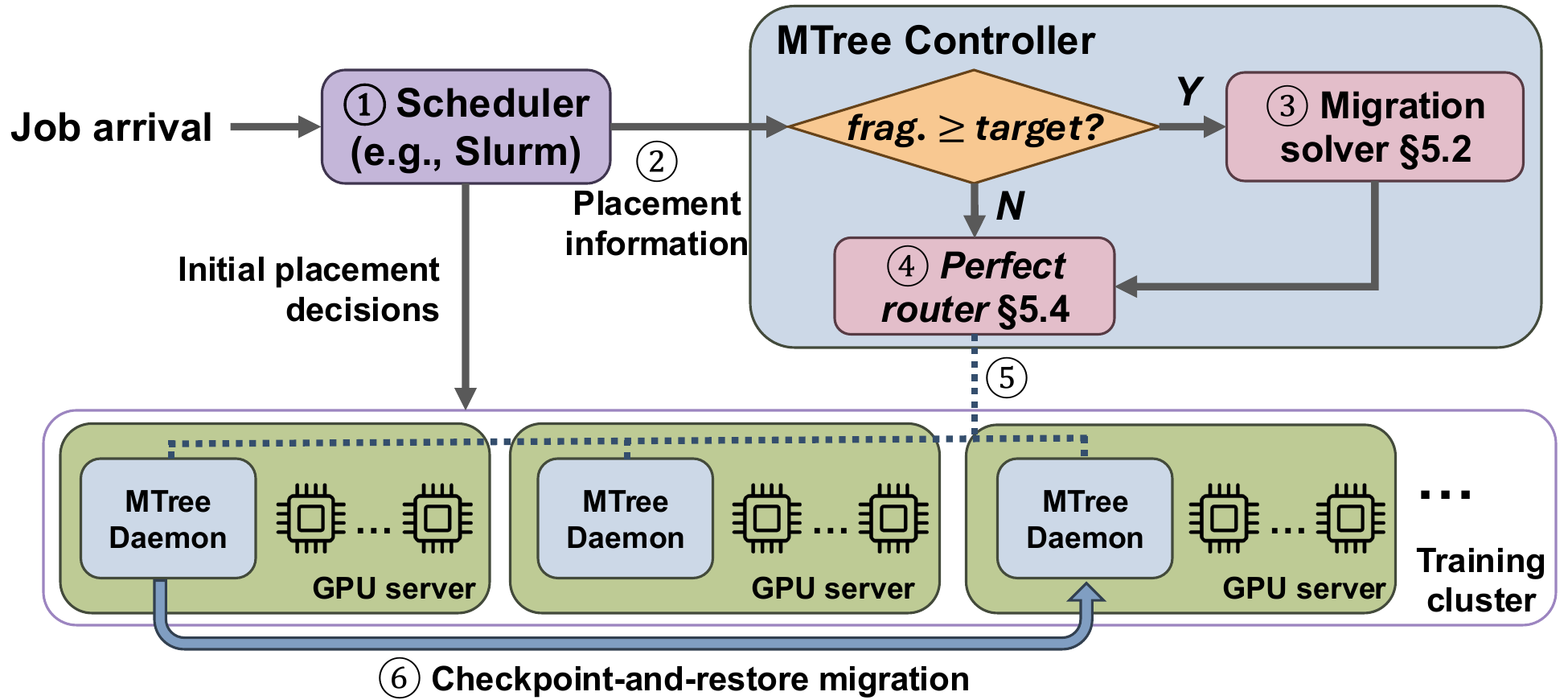}
    \caption{Overview of \name. {\normalfont A centralized controller and per-node daemons operate alongside existing schedulers.}}
    \label{fig:system_design}
\end{figure}
Our analysis in Section~\ref{sec:gpu_mig} shows that migration in shared GPU clusters is both practical and desirable. We introduce \name, a system that eliminates network contention in shared GPU clusters through migration-based defragmentation.

\para{Overview.} Figure~\ref{fig:system_design} illustrates \name's architecture and workflow. The system consists of a centralized controller (\S\ref{sec:controller}) and distributed daemons (\S\ref{sec:daemons}), operating alongside existing job schedulers without modifying their placement logic. \looseness=-1

The workflow begins when a job arrives, and the existing job scheduler \circled{1} assigns GPU placements to the new job. Rather than intervening in placement decisions, \name monitors the resulting cluster state \circled{2} and initiates migration only when necessary. We set a fragmentation threshold based on the ToR uplink count to ensure path isolation is always achievable (\S\ref{sec:fragmentation}). If any rack exceeds this threshold, the controller invokes the migration solver \circled{3} to compute the minimum worker movements needed to restore a congestion-free state (\S\ref{sec:ring}). Whether or not migrations are needed, the controller computes routes \circled{4} that provide path isolation for all DP flows (\S\ref{sec:perfect}). It then sends migration instructions and routing assignments to the daemons \circled{5}. Each daemon executes its assigned migrations via an efficient checkpoint-and-restore operation \circled{6} and configures local routing rules to steer DP traffic onto isolated paths.

\subsection{\name Controller} \label{sec:controller}
The controller interfaces with the cluster's job scheduler (e.g., Slurm~\cite{slurm}) to observe placement decisions. It does not modify the scheduler's allocation logic, but allows the scheduler to make an initial decision after job arrival. If the placement decision violates the fragmentation constraint, \sys intervenes and computes a new placement that satisfies the constraint while minimizing migrations.\looseness=-1

\para{Migration decisions.} When a new job's placement causes any rack's fragmentation to exceed the threshold, the controller computes the minimum migrations to restore compliance. The threshold is set to equal the number of uplinks per ToR, ensuring that path isolation is always achievable. Section~\ref{sec:algorithm} details the migration algorithm.

\para{\textit{Perfect routing }solver.} After each job arrival or migration, the controller reconfigures routing to achieve path isolation. \name uses a deterministic routing scheme, called \textit{Perfect routing}, that assigns each DP traffic a dedicated uplink, guaranteed to exist as long as fragmentation stays within the threshold. The controller transmits routing assignments to the daemons, which compute ECMP hashes that steer traffic onto the assigned paths, similar to prior work~\cite{crux, pecmp}.

\subsection{\name Daemons}\label{sec:daemons}
Daemons run on each server and handle job lifecycle operations: launching jobs with pre-computed ring orderings, executing migrations, and updating routing configurations.

\para{Migration execution.} We implement migration itself as checkpoint-and-restore over GPUDirect RDMA. When the controller initiates a migration, the source daemon pauses training after the current iteration completes and checkpoints model state to the destination. The destination daemon receives the checkpoint into CPU memory, then loads it onto GPUs and resumes training with the replica group. This approach leverages existing checkpointing infrastructure and avoids overwhelming GPU memory during transfers. Section~\ref{sec:prototype} describes our prototype implementation.

\section{\sys Formulation} \label{sec:algorithm}
In this section, we first show that for a GPU cluster with at least two uplinks per ToR, congestion-free states are well-defined and always achievable (\S\ref{sec:fragmentation}). We then formulate the problem of minimizing the number of migrations to bound the number of DP-flows that enter the fabric as an Integer Linear Program (ILP) (\S\ref{sec:ring}, \S\ref{sec:model_parallelism}). Finally, we present a routing scheme, \textit{\textit{Perfect routing}}, that exploits the bound on DP-flows to guarantee path isolation, nearly eliminating congestion (\S\ref{sec:perfect}).

\subsection{Bounding Fragmentation Degree}
\label{sec:fragmentation}

\begin{figure}[t]
    \centering
    \includegraphics[width=0.88\linewidth]{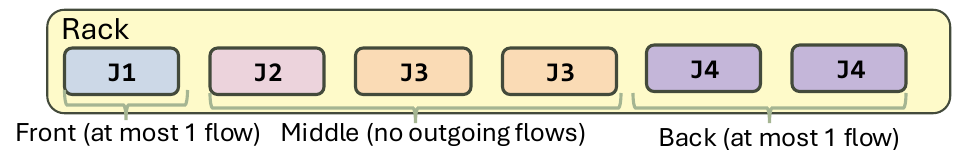}
    \vspace*{-1mm}
    \caption{Illustration of bounding fragmentation.}
    \label{fig:bounding_fragmentation}
    \vspace*{-1mm}
\end{figure}

Formally, we define a rack's \textit{fragmentation degree} as the number of DP flows originating from this rack that have destinations outside. For pure DP jobs, this is the number of fragmented jobs under the rack. We begin by characterizing what is achievable for a given cluster and set of jobs. \looseness=-1

\begin{theorem}
For any cluster and any set of pure DP or FSDP jobs running on it, there exists a job placement such that no rack has a fragmentation degree greater than two.
\end{theorem}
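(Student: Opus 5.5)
The construction is a greedy sequential fill. Model the cluster as racks $R_1,\dots,R_m$ with capacities $c_1,\dots,c_m$ (number of single-GPU server slots per rack). The jobs $J_1,\dots,J_n$ have sizes $s_1,\dots,s_n$, and we assume $\sum_j s_j \le \sum_i c_i$, since otherwise no placement exists. For pure DP/FSDP jobs, a rack's fragmentation degree is just the number of jobs that have workers both inside and outside that rack. The theorem is therefore purely combinatorial, and the plan is to exhibit one explicit placement rather than argue through the ILP.

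\textbf{The placement.} Concatenate the racks into one linear sequence of slots: all slots of $R_1$, then all slots of $R_2$, and so on. Then lay the jobs down contiguously in any fixed order: $J_1$ takes the first $s_1$ slots, $J_2$ the next $s_2$, and so on. Worker order within a job does not matter, since ring ranks can be reassigned freely. Hence each job occupies a contiguous interval of the slot sequence. Each rack also occupies a contiguous interval $I_i$.

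\textbf{The key step.} Fix a rack $R_i$ and consider a job $J$ with interval $[a,b]$ that is fragmented with respect to $R_i$. Then $[a,b]$ meets $I_i$ and also contains slots outside $I_i$. Because both $[a,b]$ and $I_i$ are intervals, this can only happen if $[a,b]$ contains the left endpoint of $I_i$ together with the slot just before it, or contains the right endpoint of $I_i$ together with the slot just after it. Since job intervals are pairwise disjoint, at most one job can straddle the left boundary of $I_i$ and at most one job can straddle the right boundary. It follows that at most two jobs are fragmented at $R_i$. A single job that straddles both boundaries, because it covers the whole rack and extends on both sides, only lowers the count. Empty slots do not create flows.

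\textbf{Remaining checks.} One check is that a job lying entirely inside $R_i$, or entirely outside it, contributes no cross-rack DP flow at $R_i$. This holds by the definition of fragmentation degree together with the ring property from Section~\ref{sec:gpu_mig}: one flow per rack spanned, and none if only one rack is spanned. The other check is that the construction is independent of rack sizes and job sizes, so it applies to "any cluster and any set of jobs." I do not expect a real obstacle. The only delicate point is making the "interval meets interval but is not contained in it" argument airtight, including the degenerate cases of a rack with zero capacity and a job of size zero, which can be dropped. I would also note that the bound of two is tight. For example, two racks of size $2$ holding three jobs of sizes $1,2,1$ force the middle job, or some job, to be split. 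This shows the constant cannot be improved to one in general.
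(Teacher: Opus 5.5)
Your proof is correct and is essentially the paper's: you lay the jobs out contiguously from left to right across the racks, so each rack is straddled by at most one job at its left boundary and at most one at its right, giving at most two fragmented jobs per rack. Your closing tightness remark is wrong, although the theorem does not need it: two racks of size $2$ with jobs of sizes $1,2,1$ admit a placement with no fragmentation at all (put the size-$2$ job alone in one rack), and even a forced split of one job would only give degree $1$. The paper's example, two jobs each larger than a rack on a three-rack cluster, does force some rack to host two fragmented jobs.
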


\textit{Proof}. Order the jobs arbitrarily, and place their workers sequentially across the cluster from left to right. For any rack, this placement induces at most three contiguous regions. The front consists of the trailing workers of a job that began on a previous rack and finishes on the current one. The end consists of the leading workers of a job, with the remaining workers placed on subsequent racks. The middle consists of jobs that are entirely contained within the rack.
Jobs in the middle region do not contribute to fragmentation, since all of their communication remains local to the rack. Moreover, the front and end regions each contain at most one job that spans racks. Therefore, each rack can contain at most two fragmented jobs, implying that the fragmentation degree of any rack is at most two. Figure~\ref{fig:bounding_fragmentation} provides a visualization of this argument. \looseness=-1

We note that there are scenarios in which achieving a fragmentation degree better than two is impossible. For example, consider two jobs whose sizes each exceed the capacity of a single rack, running on a cluster with three racks. In this case, the two jobs must overlap on at least one rack, forcing that rack to have a fragmentation degree of at least two. \looseness=-1

\begin{table}[t]
\centering
\scriptsize
\setlength{\tabcolsep}{3.5pt}  
\begin{tabular}{@{}cll@{}}
\toprule
\multirow{5}{*}{\textbf{Input}} & $w_0(s, t)$              & Initial placement: \# of job $s$ workers on ToR $t$            \\
                       & $\lambda$                & Fragmentation threshold                                        \\
                       & $S=\{s_0, \cdots, s_j\}$ & Set of jobs to be placed, where job $j$ requires $s_j$ workers \\
                       & $T$                        & Number of ToRs (racks) \\
                       & $C$                        & ToR capacity, i.e., number of nodes below a ToR                 \\ \midrule
\textbf{Output}                 & $w(s, t)$                & Final placement: \# of job $s$ workers on ToR $t$              \\ \bottomrule
\end{tabular}
\caption{Variable definitions.}
\label{tab:var_def}
\end{table}
\begin{listing}[t]
\centering
{\footnotesize

\begin{align}
\min &\textstyle\sum_{\text{all }s, t} \frac{1}{2}\left|w(s, t) - w_0(s, t)\right| \label{eq:obj} \\
\text{s.t.} \quad
& \textstyle\sum_{t<T}w(s, t) = s_j && \forall\ \text{Job } j\label{eq:job} \\
& \textstyle\sum_{s<|S|}w(s, t) \le C  && \forall\ \text{ToR }t \label{eq:tor} \\
& \left|\left\{s : w(s, t) > 0\ \text{and} \textstyle\sum_{t' \neq t} w(s, t') > 0\right\}\right| \leq \lambda && \forall\ \text{ToR }t \label{eq:frag} \\
&w(s, t) \ge 0 && \forall\ s, t \label{eq:positive}
\end{align}
}
\caption{\sys's optimization for minimizing migration.}
\label{lst:lp-obj}
\end{listing}

Once a placement limits the number of DP-flows exiting each rack to fewer than the number of ToR uplinks, a congestion-free cluster requires only a \textit{routing scheme} that assigns each flow to a dedicated uplink, thereby guaranteeing path isolation. We derive such a routing scheme, which we call \textit{\textit{Perfect routing}}, for a two-tier network topology, detailed in Section~\ref{sec:perfect}. Combined with the worst-case minimum fragmentation degree of two, our routing scheme implies that provisioning just two uplinks per rack is sufficient for a congestion-free network for shared GPU clusters.

The above arguments establish that a congestion-free state is \textit{achievable}. Next, we formulate \sys's core algorithm that finds a sufficiently defragmented placement from an initial fragmented one, minimizing the number of migrations between the two. We assume each rack has more than two allocated uplinks and aim to find a congestion-free state in which no rack's fragmentation degree exceeds the number of uplinks. \looseness=-1

\subsection{\sys ILP for DP and FSDP Jobs}
\label{sec:ring}
In this section, we identify the minimum number of migrations needed to defragment every rack below a set threshold, thereby achieving a nearly congestion-free state. We focus on pure DP and FSDP jobs here, and extend our formulation to model parallelism in Section~\ref{sec:model_parallelism}.

Table~\ref{tab:var_def} defines the variables we use in this formulation, and Listing~\ref{lst:lp-obj} illustrates \sys's optimization problem. For the complete casting as an ILP, see Appendix~\ref{appendix:full_ilp}. We model this problem as follows: let $S$ denote the set of jobs, and let $T$ denote the number of racks. We construct a bipartite graph in which the left nodes $s \in S$ correspond to fragmented jobs in the cluster, and the right nodes $t$ ($0 \leq t < T$) correspond to the racks. For every pair $(s, t)$, we introduce an edge whose weight $w(s, t)$ denotes the number of workers of job $s$ currently placed on rack $t$. Let $w_0(s, t)$ denote the initial worker placement. The resulting optimization objective is thus Equation~\ref{eq:obj} in Listing~\ref{lst:lp-obj}, capturing the total number of worker migrations as the $\ell_1$ distance between the initial and final assignments.

For the constraints, Equations~\ref{eq:job} ensure that each job is fully assigned, while Equation~\ref{eq:tor} ensures that each rack receives fewer jobs than its capacity. The key constraint is Equation~\ref{eq:frag}: the left-hand side constructs a set of fragmented jobs on each rack $t$. The first part of the set construction checks whether rack $t$ contains job $s$, while the second checks whether job $s$ is fragmented. Hence, the cardinality of this set is the number of fragmented jobs, which were required to be at most $\lambda$ (RHS).  Satisfying this constraint guarantees that the resulting placement is sufficiently defragmented to meet the routing requirements for a congestion-free state.

This optimization problem is a variant of the sparsity-constrained transport problem, which is NP-hard~\cite{sparsity_constrained_transport}. In practice, we transform it into an integer linear programming (ILP), which surprisingly has a practical solve time. For example, for clusters with up to 1,024 GPUs, solve times are on average under one second (Section~\ref{sec:ilp_results}). 

We attribute this efficiency to the ability of branch-and-bound solvers~\cite{branch_bound} to exploit the abundance of the solution space. To build intuition for the runtime, let $\gamma$ denote the number of migrations in searching for a solution. The number of distinct cluster states reachable within $\gamma$ moves is $O(N^{2\gamma})$, where $N$ is the cluster size. As branch-and-bound progressively discovers feasible solutions with increasing migration counts, large portions of the search space are pruned, since the solver discards solutions requiring the same or more moves. We believe that, because the optimal number of migrations is typically small (\S\ref{sec:ilp_results}), the solver explores a limited effective search space, resulting in fast ILP solving time.

\vspace*{-3mm}
\subsection{Extending to Model Parallel Jobs}
\label{sec:model_parallelism}
The prior formulation works for DP and FSDP jobs but fails to capture additional complexities introduced by model parallelisms. We address additional parallelisms in turn.

\para{Pipeline parallelism (PP).} Pipeline parallelism shards the model between layers into distinct stages that run on separate GPUs. Activations are communicated point-to-point between stages. Data-parallel replicas are thus across each stage, and the subsequent all-reduce (DP-flows) only happens within these stages. We only use our \textit{Perfect routing} algorithm to load balance pipeline traffic and ignore accommodating them in our formulation, because they are small (\S\ref{sec:gpu_mig}) and therefore have little impact on end-to-end job completion time. Instead, we update our formulation to treat pipeline stages as independent jobs, thereby controlling each stage's DP-flow separately. 

\para{Tensor parallelism (TP).} Tensor parallelism shards an individual layer of a DNN across GPUs, resulting in substantial communication volume. For this reason, the standard practice confines TP \textit{within} a single server, where GPUs communicate over high-bandwidth interconnects such as NVLink~\cite{nvlnvs, megatron}.

Each TP shard forms an independent DP replica group and produces its own all-reduce traffic. As a result, a fragmented job that previously generated a single ring per rack now produces as many DP-flows as there are GPUs per server when TP is employed. Following the standard configuration of 8 GPUs per server~\cite{megatron,llama3}, we assume a maximum TP degree of 8 throughout this paper, yielding up to 8 independent DP flows per fragmented job per rack. We now generalize our previous result on the attainable fragmentation degree for TP jobs. Let $d_s$ denote the TP degree of job $s$, and let $D_C = \max_{\{s\in S\}} d_s$ represent the maximum TP degree across a set of jobs. Then, the maximum number of DP-flows per server is $D_C$, and we have the following result:
\begin{corollary}
For any cluster with ML jobs, with maximum TP degree of $D_C$, there exists a job placement such that no rack has a fragmentation degree greater than $2D_C$.
\end{corollary}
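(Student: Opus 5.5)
We reduce the corollary to the preceding theorem by running its sequential left-to-right placement at server granularity, and then bound the number of DP-flows each fragmented job contributes. Since TP is confined within a single server, every job $s$ with TP degree $d_s$ occupies whole servers. Each of its servers hosts $d_s$ GPUs, one from each of the $d_s$ TP-shard replica groups. We therefore treat a job as a sequence of servers, with racks having capacity $C$ servers, and apply the ordering argument from the proof of the theorem: order the jobs arbitrarily and fill servers contiguously from left to right across racks.

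The first step is to show that each rack again splits into at most three contiguous regions. The front region holds the tail of one job that started on an earlier rack, the middle region holds jobs entirely contained in the rack, and the end region holds the head of one job that continues to later racks. This is exactly the argument in the theorem's proof, carried out on server units instead of worker units. Since whole servers are the placement unit, no job is split inside a server, so the argument transfers unchanged. Jobs in the middle region keep all of their replica-group rings inside the rack, so they contribute no DP-flows leaving the rack.

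The second step is to count the DP-flows contributed by the at most two fragmented jobs. For a fragmented job $s$, each of its $d_s$ replica groups is a ring-based collective spanning several racks. By the ring property in Section~\ref{sec:gpu_mig}, each such group produces exactly one flow leaving any rack it spans, regardless of how many of its workers are there. Because every server of $s$ contains one member of each replica group, all $d_s$ groups span exactly the same racks as the job itself. Hence the job contributes exactly $d_s \le D_C$ outgoing DP-flows at each rack it touches. Summing over the at most two fragmented jobs gives a fragmentation degree of at most $2D_C$ per rack.

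The main obstacle is the modeling assumption underlying the first step: we need that a TP-sharded job's servers are not split across racks and that every replica group of job $s$ appears on every one of $s$'s servers. If a job could use partial servers, the region argument would have to be redone at GPU granularity. In that case a single server might host fragments of several jobs, and the per-job flow count would need care. We would handle this by stating explicitly that each job occupies whole servers, which matches the paper's setting where each job occupies one or more GPU servers. If pipeline stages are treated as independent jobs, as in the formulation above, the same argument applies to each stage, with its TP degree still bounded by $D_C$.
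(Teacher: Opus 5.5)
Your proposal is correct and takes essentially the same route as the paper. The paper reruns the left-to-right placement argument of Theorem~1 and notes that each of the at most two fragmented jobs per rack now contributes up to $D_C$ DP-flows; your server-granularity framing and your observation that all $d_s$ replica groups of a job span the same racks make explicit what the paper leaves implicit. One small imprecision: when $d_s$ is smaller than the number of GPUs per server, a server holds several members of each replica group rather than exactly one, but the count of $d_s$ outgoing flows per rack for a fragmented job is unaffected.
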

The proof follows directly from the argument in Theorem~1, with each fragmented job now generating up to $D_C$ DP-flows. As a consequence, under standard tensor-parallel configurations with eight GPUs per server, avoiding congestion requires provisioning at least $16$ uplinks per rack.
We then refine the fragmentation constraint in Listing~\ref{lst:lp-obj}~(Equation ~\ref{eq:frag}) to weigh the contribution of each job $s$ by the number of rings it produces $d_s$. \looseness=-1

\para{Expert parallelism (EP).} In MoE models, each layer contains multiple expert sub-networks, only a subset of which activate for any given token. Expert parallelism (EP) places each expert on a different GPU. Between layers, an \textit{all-to-all collective} dispatches tokens to their assigned experts. Another all-to-all collective then combines the results, generating communication across all participating GPUs. Without replicating experts across GPUs, EP generates all-to-all traffic on par with DP, violating \sys's assumption where sparse DP traffic dominates. While defragmentation continues to benefit EP workloads, \name cannot guarantee congestion-free execution. We evaluate the impact of \name on EP-dominant workloads in \S\ref{eval:ep}, and discuss this limitation and potential extensions in \S\ref{sec:discussion}.

\subsection{Perfect Routing}
\label{sec:perfect}
In this section, we describe the \textit{Perfect routing} algorithm, which exploits our bound on the number of DP-flows entering the fabric to guarantee path isolation for them. For other mouse flows, such as pipeline-parallel traffic, \textit{Perfect routing} load-balances them via round-robin across available spines.

To route DP-flows, we model the problem as a bipartite edge-coloring instance. Each vertex on the left and right corresponds to a ToR switch, and for every flow between two distinct ToRs, we introduce an edge connecting the corresponding vertices. The objective is to assign a color to each edge, such that no two edges incident to the same vertex share a color. Here, each color corresponds to a distinct flow-to-uplink assignment. 

We compute such a coloring in polynomial time by reducing the problem to bipartite matching, and solve it using the Hopcroft–Karp algorithm~\cite{hopcroft_karp}. The number of colors required equals the maximum number of flows sent or received by any rack, which is always less than the number of uplinks in \sys. \looseness=-1

We note that when \textit{Perfect routing} executes separately on a fragmented cluster without migration, the number of required colors can exceed the number of physical uplinks per ToR. In these cases, we collapse multiple colors (flows) onto the available uplinks in a round-robin fashion to evenly distribute load, multiplexing multiple flows onto the same path. We evaluate \textit{Perfect routing} without \sys's defragmentation in Section~\ref{sec:eval} to isolate the performance impact of routing alone. 

\subsection{Beyond 2-tier Clos Topologies}
This section studies how to extend \name to other topologies.

\para{3-tier Clos topology.} A 3-tier Clos can be viewed as a set of 2-tier Clos \textit{pods} connected by an extra \textit{core-switch} tier. Our formulation already controls fragmentation across racks within each pod, but traffic can also congest the links between pods. We therefore also need to bound \textit{cross-pod fragmentation}. We introduce the following result:
\begin{corollary}
For a 3-tier Clos topology with ML jobs and maximum TP degree $D_C$, there exists a placement such that no pod has fragmentation degree greater than $2D_C$.
\end{corollary}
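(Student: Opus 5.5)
We reuse the sequential (left-to-right) placement from the proof of Theorem~1, but with the racks ordered pod by pod. Concretely, enumerate all racks so that the racks of pod $1$ come first, then those of pod $2$, and so on. Within each pod the racks keep their internal order. Fix an arbitrary order of the jobs (with pipeline stages treated as independent jobs, as in Section~\ref{sec:model_parallelism}) and fill servers contiguously along this global rack order. Because the order is pod-major, the racks of each pod form one contiguous interval of the global order. So each pod's servers are also a contiguous block of the linear sequence of slots.

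The key step is the pod-level analogue of the three-region argument. Since a pod is a contiguous block of the sequence, the jobs touching it split into three groups. The \emph{front} is at most one job that began in an earlier pod and ends in this pod, or passes through it entirely. The \emph{end} is at most one job that starts in this pod and continues into later pods. The \emph{middle} is jobs fully contained in the pod. Middle jobs generate no cross-pod DP-flows. They may be fragmented across racks inside the pod, but that traffic stays below the core tier and is governed by the rack-level guarantee.

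A job in the front or end region contributes one ring per TP shard crossing the pod boundary. With a ring-based collective, each such ring generates exactly one flow leaving and one entering the pod, regardless of how many of the pod's racks it touches. Hence each boundary job contributes at most $d_s \le D_C$ cross-pod DP-flows, and the pod's fragmentation degree is at most $2D_C$. The degenerate case of a single job spanning the entire pod and beyond counts once and gives at most $D_C$.

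The same placement also restricts to a left-to-right placement within each pod. By the argument of Theorem~1 and Corollary~1, every rack then has fragmentation degree at most $2D_C$. So this single placement simultaneously satisfies both the rack-level and the pod-level bounds.

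The main obstacle is the counting claim that a ring crossing a pod boundary yields only one outgoing flow per pod, rather than one per rack it occupies within the pod. I would handle it by choosing the ring order to follow the linear placement order. Then the workers of a job inside a pod are consecutive on the ring, and the ring leaves the pod exactly once; the rank re-assignment freedom noted in Section~\ref{sec:gpu_mig} permits this choice. Tightness follows from the same example as in Theorem~1, lifted to pods: two jobs each larger than a pod, on three pods.
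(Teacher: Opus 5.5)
Your proposal is correct and takes the same route as the paper, whose proof reruns the left-to-right argument of Theorem~1 with pods in place of racks: each pod has at most two boundary-crossing jobs, and each contributes at most $D_C$ DP-flows. Your pod-major rack ordering and your choice of ring order make explicit two points the paper's one-line proof leaves implicit, namely that each pod is a contiguous block and that each boundary ring leaves a pod only once; the simultaneous rack-level bound is a correct addition beyond what the statement asks.
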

The proof follows the same argument as the rack-level bound, with pods taking the role of racks: under a left-to-right placement, each pod has at most two boundary-crossing jobs, and tensor parallelism scales each such job by at most $D_C$ DP flows. The main challenge is routing. In a 2-tier Clos, bounding rack fragmentation is enough: once each rack has no more DP flows than uplinks, \textit{Perfect-routing} can isolate them. In a 3-tier Clos, however, flows also share links from the second tier to the core tier, so rack-level isolation alone does not prevent collisions. Routing must therefore isolate flows both within pods and across pods.

We handle this by splitting routing into two coupled but separate steps. First, within each pod, we run \textit{Perfect-routing}. Flows that enter or leave the pod are treated as flows to or from a virtual endpoint at the pod boundary; this ensures that the intra-pod routing step accounts for the local uplinks consumed by inter-pod traffic, without yet determining the full inter-pod path. Second, we run \textit{Perfect-routing} between the second tier and the core tier for the inter-pod flows. This assigns each inter-pod flow a color, where each color corresponds to a distinct core-facing link. Because the second and core tiers are fully connected, each color can be realized by choosing a distinct core switch. As a result, flows of different colors use different core-facing links, preserving path isolation both across pods and within pods.

\para{Rail-optimized.} Rail-optimized topology design is becoming increasingly popular~\cite{rail_optimized,rail_only,hpn}. In each pod of a rail-optimized cluster, \textit{rail switches} instead of ToR switches directly connect all GPUs of the same rank across high-bandwidth domains. Each rail switch is, in turn, connected to the spine switches. Typically, the number of rail switches equals the high-bandwidth domain size and, therefore, the maximum TP size. For the rail-optimized topology, we have the following result:

\begin{corollary}
For a rail-optimized topology with ML jobs, there exists a placement and a rank assignment such that no rail requires more than $2$ uplinks to provide path isolation.
\end{corollary}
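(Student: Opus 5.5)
The plan is to reduce the rail-optimized case to the rack-level argument of Theorem~1, with each rail switch playing the role of a ToR. In a rail-optimized pod, GPU $i$ of every server (high-bandwidth domain) connects to rail switch $i$. Under TP, a job $s$ with TP degree $d_s$ has $d_s$ independent DP replica groups, one per TP shard index. The key freedom is the \emph{rank assignment}: inside each server we choose which physical GPU hosts which TP shard. If every server of job $s$ places shard $k$ on GPU $k$, then the entire DP ring for shard $k$ runs over GPUs attached to rail $k$. That ring crosses from one rail to another only through NVLink inside a server, which happens only if ranks are misaligned, so with aligned ranks it never does.

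First, fix a consistent rank assignment: shard $k$ of every job goes on local GPU index $k$ in each of its servers. Then each DP replica group lives entirely on a single rail, and the traffic on rail $k$ consists only of rings for shard index $k$ across all jobs. Each rail's subnetwork therefore looks exactly like the pure-DP setting of Theorem~1. Servers play the role of workers, and the relevant question is which servers a ring spans.

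Second, apply the left-to-right placement from Theorem~1, now at the granularity of the rail's reach. In a rail-optimized design a rail switch connects all same-rank GPUs within a pod, so a ring stays under one rail switch whenever its job fits in one pod. Pods then act as the ``racks'' for the rail switch. Placing jobs contiguously pod by pod gives each pod at most two boundary-crossing jobs, the trailing job from the previous pod and the leading job into the next. On rail $k$, each boundary-crossing job contributes at most one ring, namely its shard-$k$ ring, and only if $d_s > k$. So each rail switch carries at most two rings leaving it, each needing one uplink. Finally, invoke the \emph{Perfect routing} edge-coloring argument: the bipartite multigraph of inter-rail-switch flows has maximum degree at most $2$, so two colors, and hence two uplinks per rail, give path isolation.

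The main obstacle is the step that replaces the $D_C$ multiplier of Corollary~1. I have to argue carefully that the $D_C$ DP flows of a fragmented job land on $D_C$ \emph{different} rail switches rather than all on one uplink set, so that per rail the count stays $2$ rather than $2D_C$. This relies on the aligned rank assignment. It also needs a check that jobs with different TP degrees, or a job using fewer than all GPUs of a server, do not break alignment. For such jobs I would first pack TP groups so that each server is used by a single job's shards at aligned indices, treating unused GPUs as idle. I would then argue that the bound only becomes easier when $d_s < D_C$.
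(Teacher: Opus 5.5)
Your route is the paper's: a rail-aligned rank assignment lets each rail switch play the role of a ToR, pods play the role of racks, and the left-to-right placement plus \emph{Perfect routing} then gives two uplinks per rail. For jobs with $d_s=8$ this works. The gap is your claim that aligned ranks confine every DP replica group to one rail.

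When $d_s<8$ (notably pure DP/FSDP jobs), a replica group has $8/d_s$ members on each server. These necessarily sit on $8/d_s$ distinct rails, so the ring must switch rails over NVLink. Which rail each inter-server hop uses is then fixed by the ring order, not by the shard-to-GPU map. A hop joining GPUs on different rails is carried rail switch$\to$spine$\to$rail switch (absent NVLink relaying), even between two servers of the same pod. Such hops consume uplinks that your count of ``two boundary-crossing jobs per pod'' never sees. They also break the step you import from Theorem~1, that pod-internal jobs use no uplinks.

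Your patch does not close this. If it means idling GPUs until each server holds one aligned TP group, it multiplies the job's server count by $8/d_s$ (eightfold for pure DP), so the placement need not fit the given cluster. And ``the bound only becomes easier when $d_s<D_C$'' conflates how many rings there are with where they run. For context, the paper's own argument stops at the same point. It spells out only the aligned TP case, remarks that for pure DP/FSDP the rank assignment decides which rail carries the outgoing DP flow, and defers the partially occupied case.

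The missing ingredient is a choice of ring order.
\begin{enumerate}
\item Put shard $k$ of the $j$-th TP group on GPU $jd_s+k$ of every server. Then replica group $k$ occupies the same rail set $R_k=\{r : r\equiv k \pmod{d_s}\}$ on each of the job's servers $S_1,\dots,S_n$, ordered pod by pod, and the $R_k$ partition the rails.
\item Run the ring as a snake. Go along the first rail of $R_k$ from $S_1$ to $S_n$, cross over NVLink inside $S_n$ to the next rail of $R_k$, return along it to $S_1$, and so on, closing over NVLink inside $S_1$. This works because $|R_k|=8/d_s$ is even or equal to $1$; in the latter case, close with the same-rail hop $S_n\to S_1$.
\item Now every inter-server hop is same-rail, so a job inside one pod uses no uplinks.
\item A job crossing a pod boundary puts at most one flow per direction on each rail switch of that pod, since each rail of $R_k$ carries one segment that crosses the boundary once.
\end{enumerate}
With at most two boundary-crossing jobs per pod, every rail switch sends and receives at most two DP flows, and your bipartite $2$-edge-coloring from \emph{Perfect routing} finishes the argument. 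Servers that a job uses only partially, where $R_k$ differs across servers, would still need separate treatment.
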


In other words, for a rail-optimized topology, the number of uplinks needed per rail switch can be independent of the TP degree as long as the rank assignment is rail-aligned. For TP segments, this is simple: all GPUs in the same TP group are placed on the same rail, so their traffic naturally separates. Pure DP and FSDP jobs are more subtle when servers are only partially occupied, because the rank assigned to each GPU determines which rail carries its outgoing DP flow. We leave the full treatment of this rank-assignment problem as future work.
\section{Evaluation} \label{sec:eval}

We evaluate \name in large-scale simulations and compare its performance with that of state-of-the-art algorithms for mitigating ML training congestion. In \S\ref{sec:sim_setup} we describe the setup and parameters for our experiments, as well as compared schemes. \S\ref{sec:eval_job_isolation} evaluates \name's ability to provide isolation between competing jobs and compares that to the performance of other algorithms. \S\ref{sec:eval_frag} also zooms in on a few specific traces to examine how fragmentation and network load evolve over time across various systems. \S\ref{sec:eval_slo} presents results on the impact of uplink provisioning on system performance. \S\ref{sec:eval_migration_moves} analyzes how many migrations \name performs and the ILP's solving time. In \S\ref{sec:migration_overhead}, we discuss the migration overhead cost to job performance. Finally, \S\ref{sec:eval_lambda} studies the impact of the migration threshold on the number of migrations. 
\subsection{Simulator Setup}
\label{sec:sim_setup}
\para{\name~Simulator.} To evaluate \name~on large-scale topologies, we build a custom Rust simulator for multi-tenant distributed ML training in \textasciitilde20k lines of code. For the network backend, we use a standard maxmin flow-level simulator~\cite{maxmin}. Our simulator represents individual training jobs as graphs of compute and communication steps, similar to prior work~\cite{simai, astrasim}. To get concrete values for the computation steps, we use the estimation techniques found in~\cite{rail_only}.

\begin{table}[t]
    \centering
    \scriptsize
    \setlength{\tabcolsep}{8pt}
    \begin{tabular}{@{}c|cccc@{}}
    \toprule
    \textbf{Model}       & \textbf{DP} & \textbf{FSDP} & \textbf{TP} & \textbf{PP} \\ \midrule
    \textbf{GPT3-13B}~\cite{megatron} & \checkmark & \xmark & \xmark & \xmark\\
    \textbf{GPT3-7B}~\cite{megatron} & \checkmark & \xmark & \xmark & \xmark\\
    \textbf{GPT-OSS-120B}~\cite{gpt_oss_120b} & \xmark & \checkmark & \xmark & \xmark \\
    \textbf{GPT-OSS-20B}~\cite{gpt_oss_20b} & \xmark & \checkmark & \xmark & \xmark \\
    \textbf{Llama2-70B}~\cite{llama2_70b} & \checkmark & \checkmark & \checkmark & \checkmark\\
    \textbf{Llama3-70B}~\cite{llama3_70b} & \checkmark & \checkmark & \checkmark & \checkmark\\
    \bottomrule
    \end{tabular}
    \caption{Jobs and parallelization strategies to generate traces.}
    \label{table:jobs}
\end{table}

\begin{figure*}[t]
    \centering
    \includegraphics[width=0.83\linewidth]{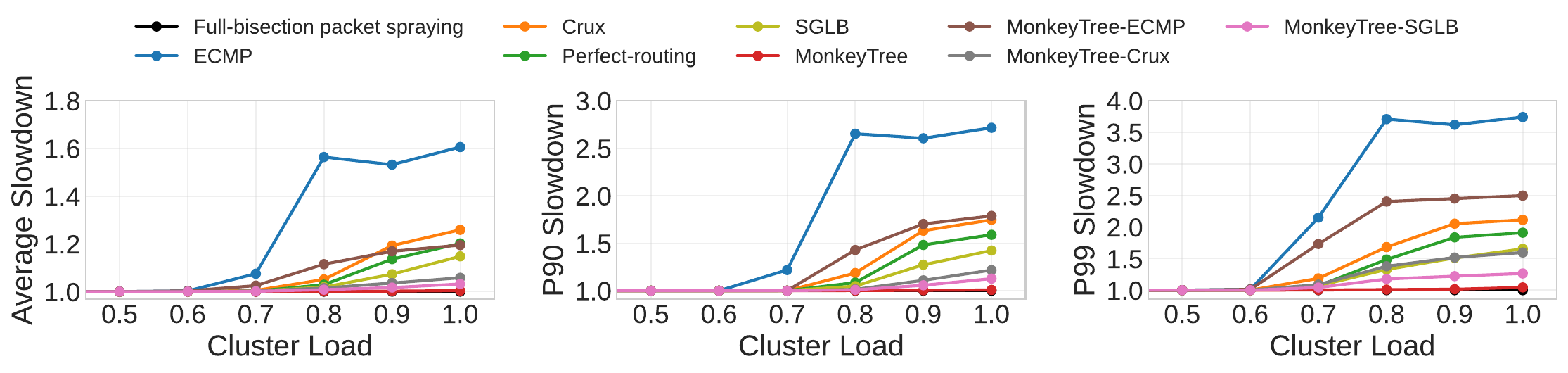}
    \vspace*{-1mm}
    \caption{Average, p90, and p99 job completion slowdown under varying cluster load for a 1,024-GPU cluster. \normalfont \name's migration-based approach reduces p99 slowdown by up to $3.59\times$ compared to routing-only schemes.}
    \label{fig:slowdown}
    \vspace*{-3mm}
\end{figure*}

\begin{figure*}[t]
    \centering
    \includegraphics[width=0.85\linewidth]{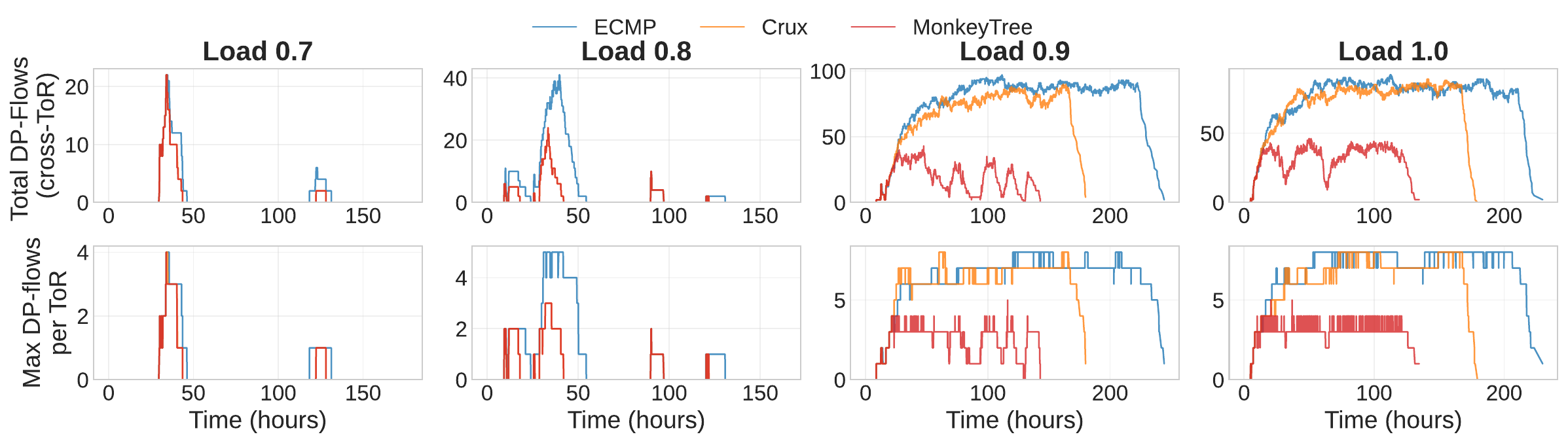}
    \caption{Cluster-wide and per-rack fragmentation over time at 70–100\% load in a 1,024-GPU cluster. \normalfont \name's active defragmentation keeps per-rack congestion within isolable bounds and yields up to $1.7\times$ faster job completion than routing-only schemes.}
    \label{fig:fragmentation}
    \vspace*{-3mm}
\end{figure*}

\para{Job Selection.} Table~\ref{table:jobs} presents the jobs we use in our simulation, along with the parallelization strategies available for selection. We generate job traces by randomly selecting jobs, determining which parallelization strategies, and specifying the degree of each parallelization.

\para{Traces.} We generate traces of job arrivals by modeling the system as a Poisson arrival process across a series of load levels from 50\% to 100\%. We plot the resulting slowdown for several systems in Figure~\ref{fig:slowdown}.

\para{Slowdown definition.} In several experiments, we plot the \textit{slowdown} in job completion. We define this slowdown as the division of the actual job runtime by its ideal completion time.

\para{Topology.} We use a standard two-tier spine tree topology, where hosts are connected to ToR switches, and these in turn are connected to a spine layer of switches. Unless otherwise specified, we place 8 hosts below each ToR, each equipped with 8 GPUs, and each GPU is given its own $400$~Gbps NIC. The oversubscription ratio changes across experiments.

\para{Cluster scheduler.} We use a locality-aware job scheduler similar to Slurm~\cite{slurm}. It first checks if any rack can fully accommodate the job it is scheduling. If it can, it will place it on the rack with the fewest available hosts. Otherwise, it finds the rack with the most available hosts, assigns as much of the job as possible to those hosts, and then recursively calls itself with the remaining workers to be scheduled.

\para{Migration simulation.} To simulate the overhead associated with migration, we implement a 4-step process. First, before migrating, jobs must synchronize at the iteration barrier. Secondly, to model the movement of the model weights, the migrating GPUs transfer their model shard as a flow from their GPU to the destination. Thirdly, migration does not complete until the destination GPU is free of any workers that were running on it. Finally, after all workers have reached their destination, we pause resumption for $10$ seconds to model re-initialization overhead, based on our testbed measurements (\S\ref{sec:prototype}). \looseness=-1

\para{Simulated algorithms.} We simulate several approaches to mitigate congestion in multi-tenant training clusters:
\begin{itemize}
\item \textbf{\name}. We run the \name~system as described in Figure~\ref{fig:system_design}. On every job arrival, \name controller monitors the fragmentation of each rack and intervenes if the set threshold is exceeded. In our experiments, we typically set the threshold to the number of uplinks per ToR. To study the benefit of migration alone, we also run variations of \name~built on top of ECMP, Crux, and SGLB rather than \textit{Perfect routing}.
\item \textbf{MonkeyTree3}. \name with the additional constraint for managing cross-pod fragmentation.
\item \textbf{ECMP}~\cite{ecmp}. ECMP is a classic load-balancing scheme that hashes the 5-tuple of flow information at each switch. This results in flows being pseudorandomly spread across the available paths.
\item \textbf{Crux}~\cite{crux}. Crux is a greedy scheduler that sorts jobs by expected GPU utilization and routes them sequentially to minimize overlap.
\item \textbf{\textit{Perfect-routing}-only} (\S\ref{sec:perfect}). We evaluate the \textit{Perfect routing} algorithm without \sys's migration as a separate baseline to isolate the performance impact of routing and migration.
\item \textbf{SGLB}~\cite{sglb}. SGLB is a load-balancer in which each switch dynamically routes flows away from congested paths. Switches exchange congestion information with one another to achieve rapid convergence.
\item \textbf{Full bisection with packet spraying}. Lastly, we simulate the cluster provisioned with a full-bisection fabric and packet spraying enabled, representing the performance upper bound at a high cost. 
\end{itemize}
\enlargethispage{0.3\baselineskip}
\begin{figure*}[t]
    \centering
    \includegraphics[width=0.85\linewidth]{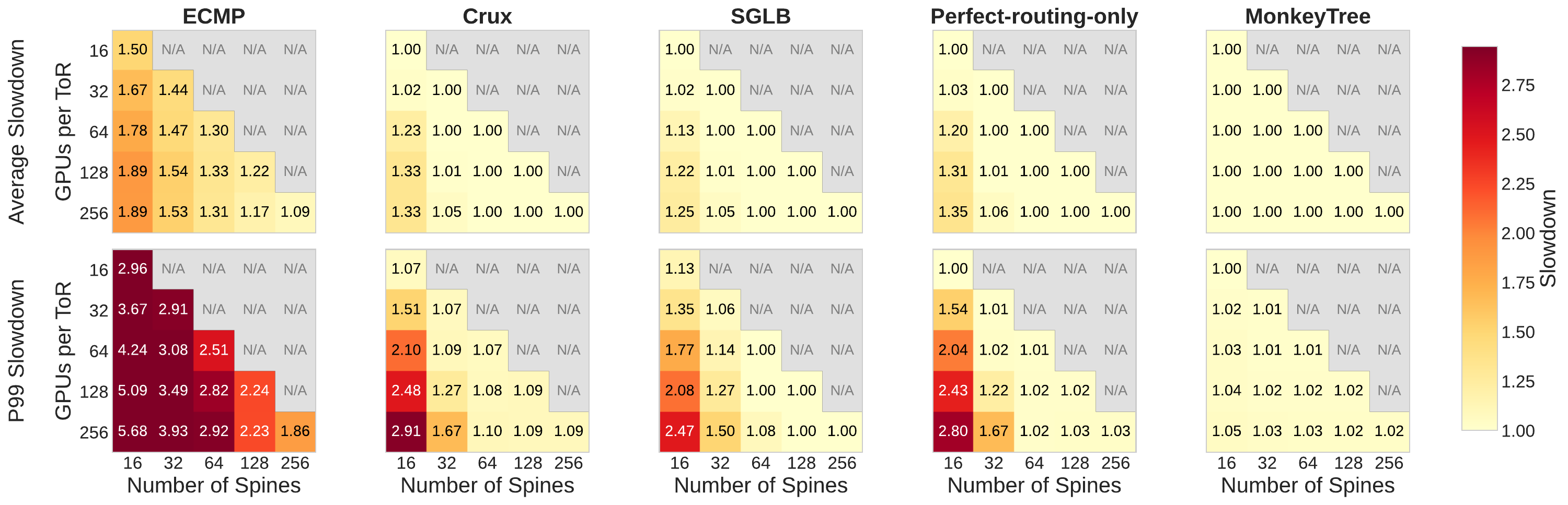}
    \vspace*{-1mm}
    \caption{Heatmaps of average and p99 slowdown across oversubscription ratios for a 2,048-GPU cluster for different systems. \normalfont \name achieves near-optimal performance at 16:1 oversubscription while competing schemes require up to $4\times$ more uplinks to match.}
    \label{fig:slo}
    \vspace*{-3mm}
\end{figure*}

\begin{figure*}[t]
    \centering
    \includegraphics[width=0.88\linewidth]{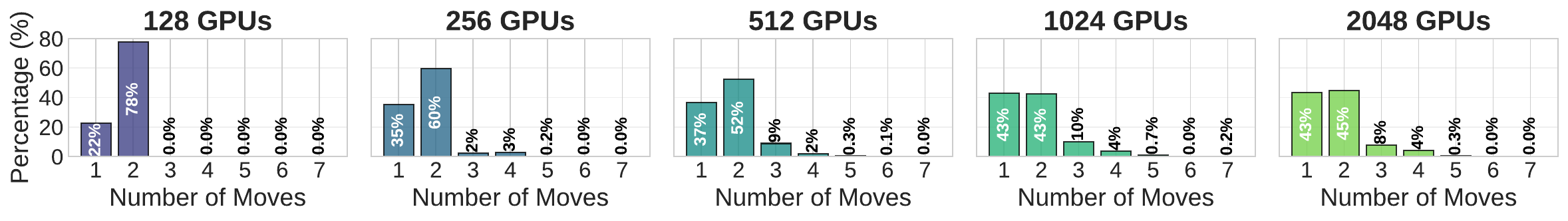}
    \caption{Distribution of job migrations required to restore low fragmentation across cluster sizes. \normalfont Over $80\%$ of defragmentation events require only one or two moves, and the average remains below 2 regardless of scale.}
    \label{fig:ilp_moves}
    \vspace*{-3mm}
\end{figure*}
\subsection{Performance Under Varying Cluster Load}
\label{sec:eval_job_isolation}
We begin by comparing the performance of various algorithms for running jobs in a shared cluster.
Figure~\ref{fig:slowdown} presents the average, p90, and p99 job completion slowdowns for job traces of varying load on the $x$-axis, with different algorithms. The $y$-axis shows the slowdown. The cluster comprises $1,024$ GPUs, with $64$ GPUs and $16$ uplinks per rack. 
\enlargethispage{0.1\baselineskip}

At 100\% load, \name~maintains a nearly contention-free state, having only a $4\%$ slowdown at the 99th percentile. On average, it performs within $0.5\%$ of packet-spraying. Slowdown stems only from migration overhead and brief pipeline-parallel transfer congestion.

For different routing algorithms, our \textit{Perfect-routing} algorithm guarantees path isolation when possible and load-balances otherwise. On the other hand, ECMP suffers from hash collisions, leading to elephant flows being routed together, while Crux's greedy heuristic occasionally fails to provide path isolation even when it is possible to do so. Notably, \textit{Perfect-routing}-only outperforms ECMP and Crux, beating Crux by up to 5\% on average. SGLB is the best of the non-\name baselines, as its dynamic rerouting enables finer-grained sharing of available links. \name beats all of these routing-only approaches by $1.58\times$ to $3.59\times$ in p99 job completion time. \looseness=-1

We demonstrate the benefit of migration by evaluating \name-variants with migration only (without the \textit{Perfect routing} algorithm). Notably, \name-ECMP outperforms Crux by $5$\% and matches \textit{Perfect-routing} at the highest load. At high load, fragmentation generates significant network contention that no routing scheme, however sophisticated, can fully resolve. \name addresses the problem at its root: by migrating jobs to reduce fragmentation, even a simple scheme like ECMP is sufficient to outperform more intelligent routing-only approaches. Compared to their corresponding baselines, \name-Crux improves slowdown by up to 32\% over Crux, while \name-ECMP attains up to a $1.5\times$ speedup over ECMP. \name-SGLB improves SGLB's p99 slowdown by $1.31\times$ at 100\% load. While SGLB outperforms \textit{Perfect-routing} on its own, with \name, \textit{Perfect-routing}'s performance isolation guarantees keep slowdown generally low. In contrast, SGLB may occasionally fail to converge to such a state, even when one exists. \looseness=-1

\label{sec:eval_frag}

Figure~\ref{fig:fragmentation} plots the cluster-wide and per-rack maximum fragmentation over time for ECMP, Crux, and \name on a 1,024-GPU cluster with three uplinks per rack, using pure DP/FSDP job traces at 70\% to 100\% load. The top row shows the total number of fragmented DP replica groups; the bottom row shows the maximum number of DP flows (rings) per ToR, with values exceeding 3 indicating congestion.

At 70\% load, fragmentation remains near zero except for two brief bursts, with little difference among algorithms—the cluster scheduler alone suffices. At 80\% load, ECMP suffers high fragmentation due to an early burst of arrivals, while \name and Crux (overlapping in the Figure) escape by completing jobs faster.
At 90\% load, the arrival rate overwhelms the cluster scheduler and routing-only schemes. ECMP and Crux reach average fragmentation degrees of 86 and 77, respectively, after the first 50 hours; Crux is slightly better because its more even traffic spreading completes jobs faster, freeing slots for better placement decisions. \name reduces total fragmentation by $4.85\times$ over Crux, only once exceeding 40 fragmentation degrees, and maintains per-rack fragmentation at or below 3—within Perfect routing's path-isolation guarantee. This enables \name to finish $1.25\times$ faster than Crux and $1.7\times$ faster than ECMP.
Finally, at 100\% load, ECMP and Crux become nearly indistinguishable, differing by an average of just 3 fragmentation degrees. \name oscillates between 3 and 4 per-rack fragmentation as it continuously defragments to keep pace with demand, finishing $1.33\times$ faster than Crux.
\enlargethispage{0.3\baselineskip}
\vspace*{-1mm}
\subsection{Performance across Oversub. Ratios}
\label{sec:eval_slo}

By keeping fragmentation in check, \name requires only a minimum number of uplinks, twice the number of DP flows per job, to guarantee path isolation. This implies \name's performance is solely determined by the number of uplinks per ToR, thereby enabling higher oversubscription than conventional routing-based approaches.

Figure~\ref{fig:slo} presents heatmaps of average and p99 slowdown for \name, Crux, ECMP, Perfect routing, and SGLB across a 2,048-GPU cluster. The $y$-axis varies the number of GPUs per rack from $16$ to $256$, and the $x$-axis varies the number of uplinks per rack by scaling the spine count until reaching full bisection bandwidth (entries along the diagonal). Packet spraying with full bisection bandwidth (not plotted) achieves a slowdown of 1 along this diagonal and serves as the baseline.

\begin{figure*}[t]
   \centering
    \captionsetup[subfigure]{aboveskip=-1pt,belowskip=-1pt}
    \begin{minipage}{0.3\textwidth}
    \centering
    \includegraphics[width=0.9\linewidth]{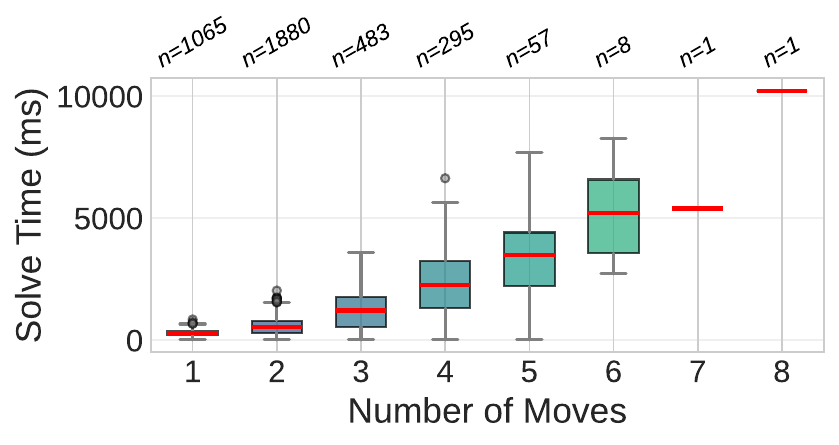}
    \caption{ILP solve time vs. number of migrations on a 1,024-GPU cluster. \normalfont Runtime scales  with move count, averaging 540~ms in the common two-move case.}
    \label{fig:ilp_runtime}
  \end{minipage}
  \hfill
  \begin{minipage}{0.32\textwidth}
    \centering
    \includegraphics[width=0.9\linewidth]{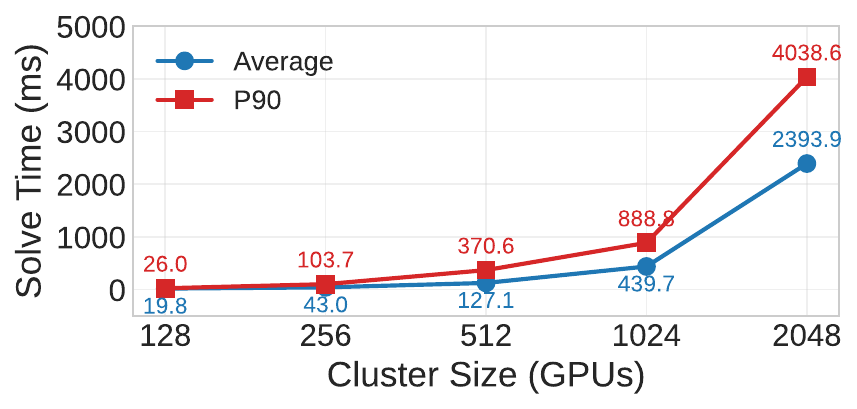}
    \caption{ILP solve time across cluster sizes from 128 to 2,048 GPUs. \normalfont p90 solve time stays under 5 s at 2,048 GPUs, well within practical planning budgets.}
    \label{fig:ilp_cluster_runtime}
  \end{minipage}
  \hfill
  \begin{minipage}{0.33\textwidth}
        \centering
    \includegraphics[width=0.9\linewidth]{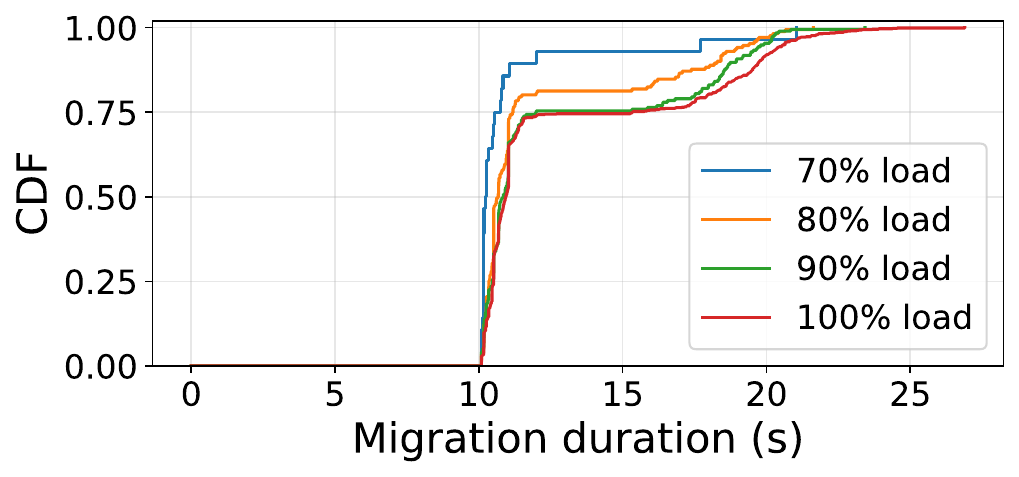}
    \caption{CDF of migration duration at different loads. \normalfont We assume 10~seconds of system overhead. Remaining time is due to network transfers and iteration synchronization.}
    \label{fig:migration_duration}
  \end{minipage}
   \vspace*{-3mm}
\end{figure*}

\name attains p99 performance within 5\% of packet spraying at an oversubscription ratio of 16:1, with average slowdown below $0.3$\% regardless of how many additional links are provisioned. Its only requirement is that the minimum number of uplinks is present; beyond that, it is nearly agnostic to the oversubscription ratio. Other systems require substantially more bandwidth. At $256$ GPUs per rack, SGLB, the best competing scheme, needs $4\times$ the uplinks to match \name's p99 performance. ECMP performs worst, with an average slowdown of up to $50$\% even at full bisection.

Notably, competing schemes do not maintain consistent performance even at a fixed oversubscription ratio. Crux has a $10$\% p99 slowdown at 4:1 oversubscription with $256$ GPUs per rack but $27$\% at the same ratio with $128$ GPUs per rack. The oversubscription ratio alone does not determine performance: at a fixed cluster size, fewer GPUs per rack means more racks, which causes more jobs to span rack boundaries and increases fragmentation. At the same time, each rack has fewer uplinks to absorb this fragmentation. Hence, what determines routing performance is the combination of fragmentation and oversubscription ratio. By controlling fragmentation, \name is the only system whose performance depends solely on the number of uplinks.

\subsection{Migration Solver Scalability}
\label{sec:ilp_results}
\label{sec:eval_migration_moves}
We evaluate the efficiency of \name's ILP solver, examining the number of migrations required to defragment the cluster and the solver's runtime characteristics. These results are critical for establishing that \name's migration planning is lightweight enough to be invoked frequently in practice.

\begin{figure}[t]
    \centering
    \includegraphics[width=.95\linewidth]{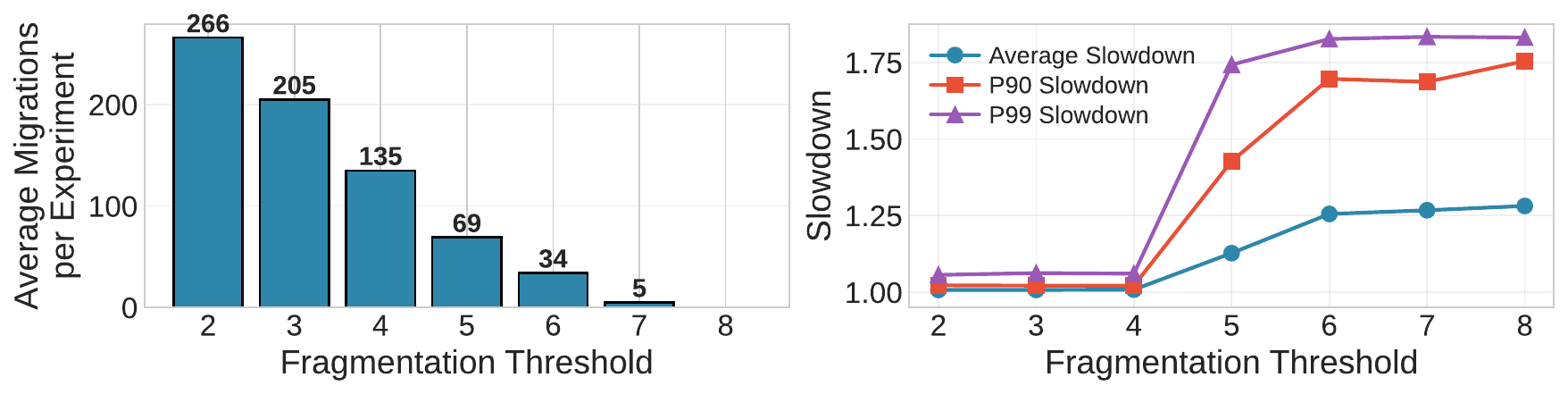}
    \caption{Number of migrations and job slowdown vs. fragmentation threshold on a 1,024-GPU cluster. \normalfont A threshold matching the number of uplinks achieves near-optimal performance with diminishing returns from further reduction.}
    \label{fig:lambda}
    \vspace*{-2mm}
\end{figure}

Figure~\ref{fig:ilp_moves} presents the distribution of required job migration (number of moves) to lower the fragmentation below two on clusters of varying sizes across five traces each. The distribution shifts as cluster size increases, but the average remains consistently between $1.7$ and $1.8$. Approximately $80\%$ of the weight is concentrated on one or two moves across all scenarios, and move counts greater than $5$ are rare, occurring in fewer than $1\%$ of the migrations for each cluster.

Figure~\ref{fig:ilp_runtime} shows the distribution of the ILP's solve time versus the number of moves across 5 traces on a 1,024-GPU cluster. The most common case of two moves averages 540~ms and never exceeds 4~s. Solve time scales roughly linearly with the number of moves, reaching 6.7~s at 6 moves. Cases requiring 7 or 8 moves each occurred only once and are thus not statistically meaningful. Overall, these results confirm that the ILP's runtime is governed primarily by the number of moves (\S\ref{sec:ring}).

Finally, Figure~\ref{fig:ilp_cluster_runtime} plots the ILP's solve time across cluster sizes ranging from $128$ to $2,048$ GPUs. For clusters of $1,024$ GPUs or fewer, the average solve time remains under $0.5$~s with a p90 below $1$~s. At $2,048$ GPUs, solve time increases by over $5\times$, with an average of $2.39$~s and a p90 of $4.04$~s. Even so, this remains well within practical limits: migrations are triggered infrequently and can tolerate seconds of planning latency, making the ILP viable even at the large cluster sizes.

\subsection{Migration Overhead}
\label{sec:migration_overhead}
This section analyzes the migration overhead required to keep fragmentation below the threshold.
Figure~\ref{fig:migration_duration} presents CDFs of migration duration across traces on a $1,024$-GPU cluster at $70$\% to $100$\% load. All distributions start at $10$ seconds, reflecting the baseline latency for workers to resume after migration completes.

Across all loads, the median migration duration remains below $11$ seconds, indicating that most migrations add little overhead beyond the baseline latency. The longest observed migration takes $26.89$ seconds; these tail cases arise from dependency chains in which a job must wait for downstream migrations to complete before it occupies the target slot. Even so, within $30$ seconds of initiating migrations, the cluster is fully restored to a congestion-free state.

The per-job overhead is also modest. No job in our experiments is migrated more than 6 times over its lifetime, resulting in a worst-case downtime of 115 seconds. Compared to that job's ideal runtime of 5.34 hours, migration constitutes less than 0.6\% overhead.

 \begin{figure}[t]
    \centering
    \begin{subfigure}[b]{0.48\columnwidth}
        \centering
        \includegraphics[width=\linewidth]{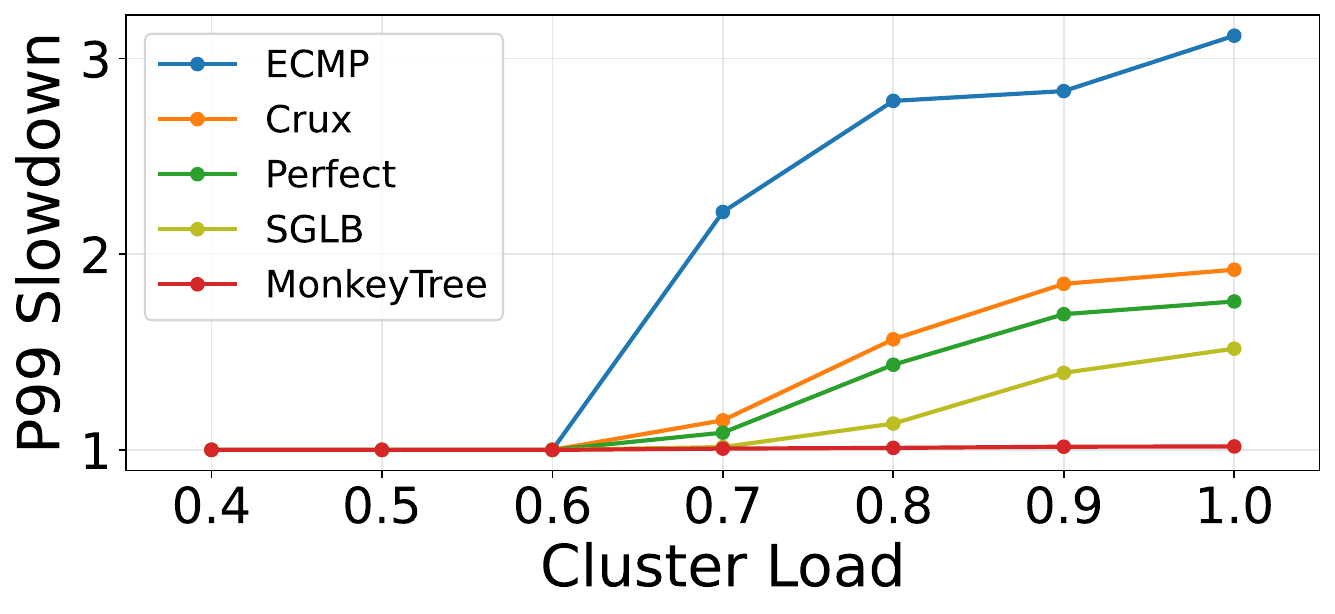}
        \vspace*{-5mm}
        \caption{Rail-optimized}
        \label{fig:rail}
    \end{subfigure}
    \hfill
    \begin{subfigure}[b]{0.48\columnwidth}     
        \centering
        \includegraphics[width=\linewidth]{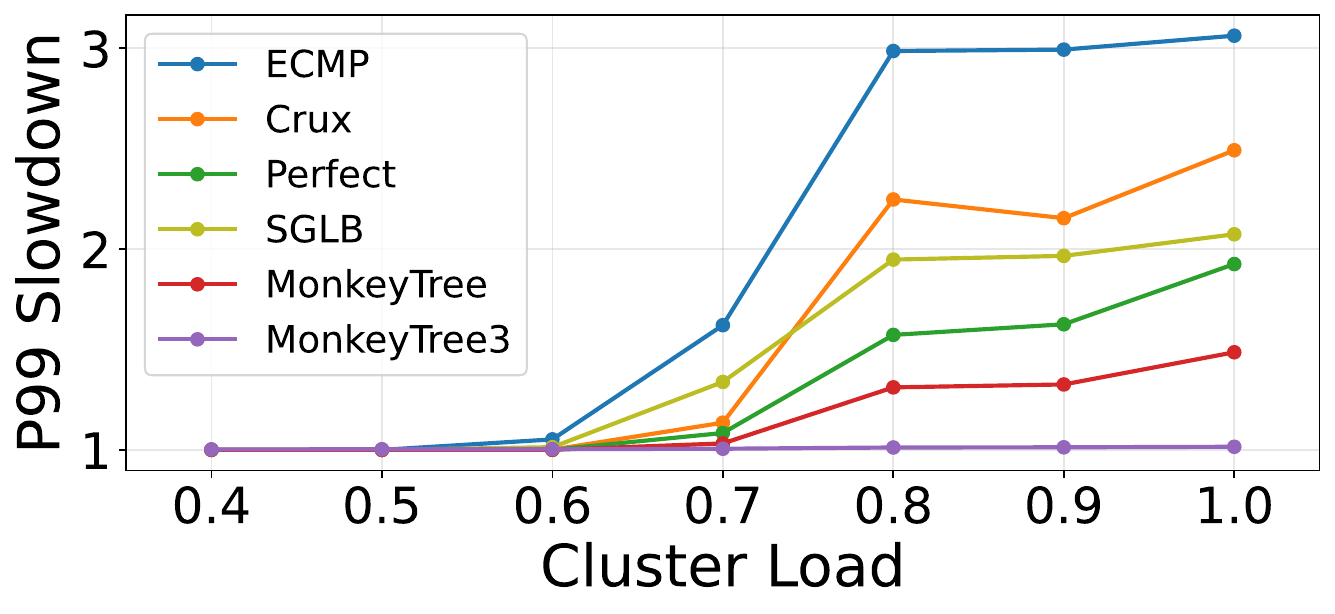}
        \vspace*{-5mm}
        \caption{3-tier Clos}
        \label{fig:fat}
    \end{subfigure}
    \vspace*{-1mm}
    \caption{Performance across rail-optimized and 3-tier Clos topologies. \normalfont \sys keeps p99 slowdown below 2\% on the rail-optimized topology, while MonkeyTree3 achieves the same on the 3-tier Clos by controlling fragmentation across both network levels.}
\end{figure}

\vspace*{-0.5mm}
\subsection{Threshold Sensitivity}
\label{sec:eval_lambda}

Section~\ref{sec:fragmentation} established that a per-rack fragmentation degree at or below the number of uplinks suffices for Perfect routing to guarantee path isolation, and that a fragmentation degree of two is always feasible. The fragmentation threshold controls when \name triggers migrations: \name migrates jobs only when a rack's fragmentation exceeds this threshold. Setting it above the number of uplinks reduces migrations at the cost of additional congestion. This section evaluates the tradeoff between this threshold and the resulting migration count and slowdown, using a trace of $1.2$k pure DP and FSDP jobs on a $1,024$-GPU cluster with four spine switches.

Figure~\ref{fig:lambda} plots the number of migrations alongside the average, p90, and p99 slowdown. At a threshold of $4$, matching the number of uplinks, p99 slowdown is within $6$\% of ideal, and the average is within $1$\%. This is slightly worse than the results in \S\ref{sec:eval_job_isolation}, as a workload composed entirely of FSDP and DP jobs with few uplinks presents a more constrained setting than the full workload mix. Decreasing the threshold below four yields less than $0.5$\% additional improvement: \textit{Perfect routing} already guarantees path isolation at a fragmentation threshold of four, so further migrations cannot eliminate elephant-flow collisions. A lower threshold only reduces collisions between elephant flows and in-flight migration traffic, yielding a marginal benefit offset at the cost of up to $2\times$ more migrations.

\subsection{Rail-optimized and 3-tier Clos}
In this section, we evaluate \name's performance across different types of topologies.
Figure~\ref{fig:rail} shows the p99 job completion slowdown of running traces of varying load across a 1,024 GPU cluster organized in a rail-optimized topology. Each rail is equipped with two uplinks to the spine layer. The results are similar to those observed in the 2-tier Clos topology (Figure~\ref{fig:slowdown}): \sys maintains a p99 slowdown of less than 2\% even under the highest load. The next-best system, SGLB, has a 51\% p99 slowdown at the highest load.

Figure~\ref{fig:fat} plots the p99 job completion slowdown of traces across a 1,024 GPU 3-tier Clos topology. Each ToR has 16 uplinks to the aggregation layer, and each aggregation switch has 16 uplinks to the core switch layer. MonkeyTree3, \sys's variant that addresses fragmentation on both levels, achieves the best performance, keeping the p99 slowdown within 2\% of the ideal. In contrast, the red \name line, which runs \sys only at the pod level, has a 48\% p99 slowdown due to the lack of cross-pod contention control. Unlike prior experiments, the next best system is \textit{Perfect-routing}, which reduces the p99 slowdown over SGLB by nearly 8\% at the highest load. This is because in SGLB, each switch makes independent decisions, which becomes more difficult as the number of switch layers and the number of switches per flow increase.



\begin{figure}[t]
    \centering
    \includegraphics[width=0.85\linewidth]{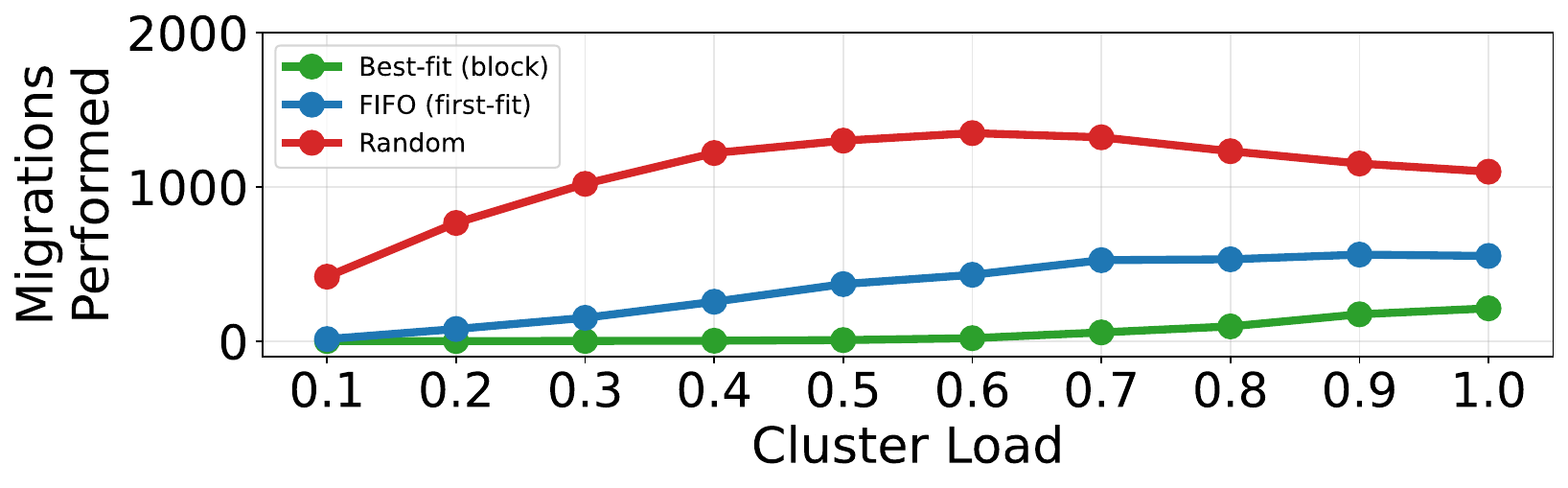}
    \vspace*{-2mm}
    \caption{Impact of front-end schedulers. \normalfont Best-fit reduces migrations compared to FIFO and Random.}
    \label{fig:scheduler_ablation}
\end{figure}

\begin{figure}[t]
    \centering
    \includegraphics[width=0.99\linewidth]{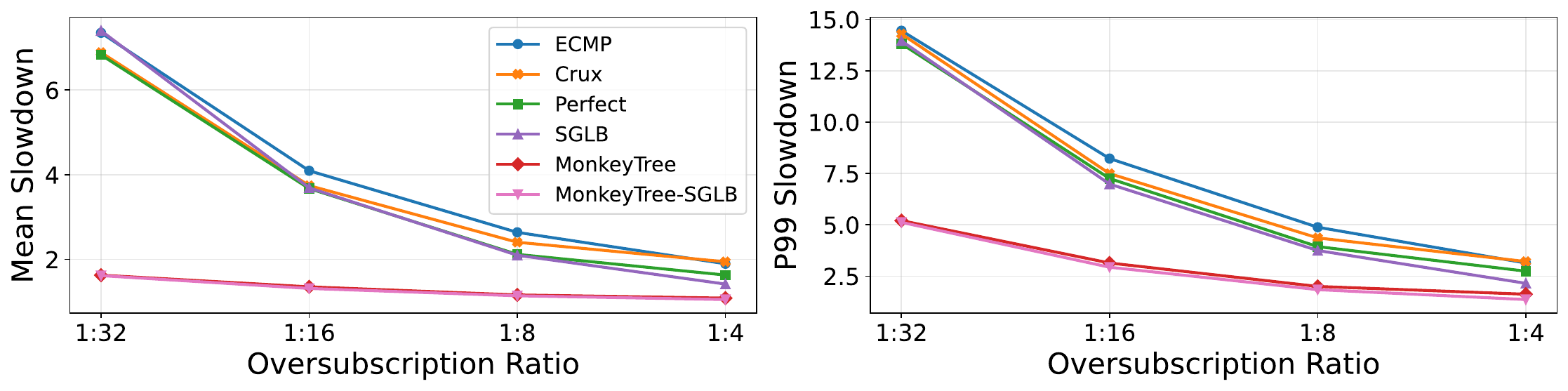}
    \caption{EP-dominated workloads. \normalfont \sys-SGLB performs reduces p99 slowdown by up to 25\% over \sys with Perfect-routing, because dynamic load-aware routing is more effective for dense EP traffic.}
    \label{fig:moe_spine}
\end{figure}

\subsection{Front-end Scheduler Ablation Study}
This section performs an ablation study of different front-end schedulers to assess their impact on \sys. We compare three schedulers on traces of 2.4k jobs with varying load: Best-fit, used in the rest of the evaluation; FIFO, which scans for free GPUs from left to right; and Random, which assigns jobs randomly among available slots.

Best-fit reduces the number of migrations by $2.6\times$ at the highest load compared to FIFO, the next-best scheduler. This is because Best-fit makes locally optimal placement decisions that keep jobs on as few racks as possible. Random induces the most migrations, peaking at $1.3$k migrations at 60\% load, since its placements often have poor locality. At higher loads, however, cluster churn increases, and exiting jobs are quickly replaced by new ones, leaving fewer machines available for Random to choose from. As a result, Random makes fewer poor placement decisions. Since migrations also require jobs to be present and fragmentation to remain below the thresholds, Random peaks at 60\% load rather than at either extreme.
Despite these differences, all three schedulers achieve similar performance, with p99 slowdown within 1\%. With 2.4k jobs and at most 1.3k migrations, roughly every other job is migrated once, and the migration cost remains small relative to total training time.

\subsection{EP Dominant Workloads}
\label{eval:ep}
This section evaluates \sys and other systems on EP-dominated workloads. We use a 1,024-GPU spine-tree topology and run traces with varying load, where every job’s network demand is dominated by expert parallelism, with at least $10\times$ more All-to-All traffic than ring-based traffic. Because simulating All-to-All is computationally expensive, we use 100 jobs per trace and initialize the cluster from sampled placements of normal runs: \sys placements for \sys variants, and ECMP placements for all other systems.
Figure~\ref{fig:moe_spine} shows the results. \sys-SGLB performs best overall, reducing p99 job completion slowdown by up to $19\%$ over \sys with \textit{Perfect-routing}. For All-to-All traffic, \sys cannot provide the isolation guarantee used by \textit{Perfect-routing}; as a result, SGLB's dynamic, load-aware rerouting outperforms static routing. SGLB is also the best non-\sys system, reducing p99 slowdown by $28\%$ over \textit{Perfect-routing} at the highest load. \textit{Perfect-routing} and Crux provide smaller gains over ECMP in this setting, reducing p99 slowdown by at most $1.24\times$. This is because All-to-All traffic creates many flows, making randomized ECMP more effective than it is for sparse ring-based collectives.

\section{\name Prototype} \label{sec:prototype}
We prototype \sys on a testbed of five servers, each with one Nvidia A100 GPU~\cite{a100} and 100~Gbps ConnectX-5 NIC~\cite{roce}, to demonstrate practicality and quantify migration overhead.
\para{RDMA-Based Checkpointing.} We extend PyTorch's Distributed Checkpoint (DCP)~\cite{pytorch_dcp} with custom \texttt{StorageReader} and \texttt{StorageWriter} interfaces that perform RDMA transfers instead of filesystem I/O for migration. During migration, the source worker writes GPU tensors to the CPU DRAM of a remote staging server using GPUDirect RDMA. Targeting CPU DRAM avoids contention for GPU memory at the destination and enables pipelined loading, allowing tensor transfers to overlap with model initialization. The replacement worker constructs ShardedTensor templates matching the original layout and accepts RDMA writes to retrieve the checkpoint. A lightweight protocol coordinates shard availability without blocking the critical path. \looseness=-1

\begin{figure}[t]
    \centering
    \includegraphics[width=0.85\linewidth]{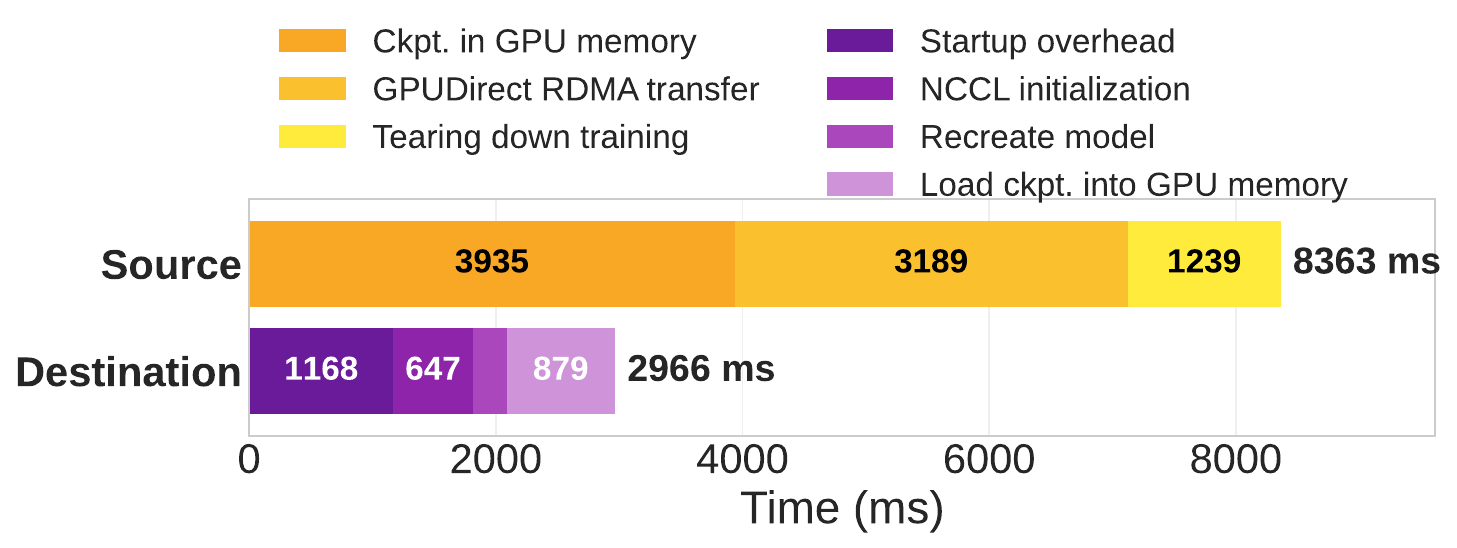}
    \vspace*{-1mm}
    \caption{Timing breakdown of migrating a single Llama3-8B worker on the testbed. \normalfont Checkpoint composition dominates at $3.93$~s, and system overhead besides network transfer is $9.02$~s, validating the $10$ s migration budget assumed in simulation.}
    \label{fig:migration_overhead}
\end{figure}

Each \sys daemon creates a migration listener at cluster start with a pre-allocated buffer to hold the checkpoint. When the controller triggers a migration, it sends the migrating job a signal with the worker that is migrating and its destination. The daemons on the migrating jobs wait for the current training iteration to complete before initiating the RDMA checkpoint. Once the transfer finishes, the job synchronizes, loads from the checkpoint, and resumes training.

\para{Prototype microbenchmark.} Figure~\ref{fig:migration_overhead} presents the timing breakdown of migrating a single worker in a $4$-node Llama3-8B training instance using our prototype. The top column plots overhead on the sender side (source of migration), while the bottom one plots the receiver side (destination of migration). The dominant cost is checkpointing at $3.93$~s. \sys-daemon transfers the $12$~GB checkpoint shard (model weights and optimizer state) over RDMA in $3.19$~s end-to-end. The transfer itself accounts for only $1.07$~s at $90$~Gbps; the remainder is serialization and memory management overhead. Source server cleanup adds $1.24$~s, though this does not block the migrated model from resuming. On the destination, NCCL initialization, model loading, and other startup costs total $2.97$~s. Excluding the RDMA transfer's network time, the total overhead is $9.02$~s, validating the $10$~s migration budget assumed in simulations. \looseness=-1

\section{Discussion}\label{sec:discussion}

\para{Scaling \sys to larger clusters.}
\sys's ILP exhibits p90 solve times below five seconds for clusters of up to 2048 GPUs.
We divide larger clusters into independent partitions, each managed by a separate \name~controller, thereby keeping ILP solve times tractable.
A second-tier \name~controller then treats each sub-cluster as a virtual rack and manages migrations across them.



\para{Limitations of \sys on EP and CP.}
Our formulation assumes collective traffic is dominated by sparse data-parallel rings, which does not hold for all parallelism strategies.

First, EP without expert replication introduces dense All-to-All traffic that can be on par with DP traffic, violating this assumption.
As discussed earlier, \sys still helps by reducing fragmentation, but it cannot guarantee congestion-free execution for EP-dominated workloads.
Our evaluation in \S\ref{eval:ep} confirms this behavior: \sys is most effective when paired with dynamic load-aware routing, reducing p99 slowdown by up to $19\%$ over static routing.
Models with modest top-$k$ gates and expert counts remain more tractable, and expert replication further confines the All-to-All to a many-to-many pattern.\looseness=-1

Second, CP partitions long sequences across workers and requires both data-parallel and context-parallel rings.
This makes placement an NP-hard min-cut variant.
While our ILP can be extended to handle EP and CP, it does not provide congestion-free guarantees.
We leave EP-aware topology optimization under varying replication strategies and CP-aware placement to future work.

\para{Other types of collectives.} Ring-based collectives remain widely used, but other algorithms, such as tree-based collectives, are increasingly being adopted. The core principle behind \sys is not specific to rings: bounding placement fragmentation improves locality and limits network contention across collective algorithms. While different collectives may require different placement rules to obtain tight bounds, the objective remains similar: because ML training jobs often have highly symmetric communication patterns, low-fragmentation placements remain effective for many non-ring collectives.

\para{\sys under lower oversubscription.} \sys supports highly oversubscribed topologies through guarantees independent of the oversubscription ratio. Even in less oversubscribed deployments, \sys reduces traffic to upper network layers, providing benefits beyond isolation for sparse ring flows. For denser patterns such as EP, the additional links help balance load; for sparse traffic, they improve fault tolerance under link failures.

\section{Related Work}\label{sec:related}

\para{Mitigating congestion in multi-tenant training clusters.}
Prior work on scheduling multi-tenant training jobs follows two directions.
The first temporally shares network links by interleaving communication from competing jobs into idle periods during computation~\cite{rajasekaran_hotnets, mltcp, cassini}.
This approach relies on favorable computation-communication overlap patterns, which are not always present.
The second spatially distributes traffic across multiple paths via routing or adaptive load balancing~\cite{crux, sglb}.
\name~employs similar load-balancing mechanisms but also controls fragmentation to bound aggregate network load.
More recent systems combine temporal and spatial techniques~\cite{harmonics, foresight}; however, none provide explicit guarantees on maximum network load.

\para{Congestion control in datacenter networks.}
A vast literature addresses datacenter congestion through congestion control protocols~\cite{congestion_control, cubic, reno, hpcc, dctcp}, flow scheduling~\cite{pfabric, hedera, PDQ, UPS, homa}, and coflow abstractions~\cite{coflows, coflow1, sincronia, varys, coda}.
\name~takes a complementary approach, managing congestion through placement and migration rather than network-layer mechanisms.

\para{CPU migration.} CPU migration is well studied for load balancing in shared clusters~\cite{nsdi_live_migration_core_1, migration_core_2, sandpiper}, though later work adopted more conservative policies as datacenter-scale deployments exposed migration overheads~\cite{vm_risky_1, vm_risky_2}.
The most closely related line of work jointly optimizes placement and routing, balancing migration cost and network congestion~\cite{joint_migration_routing, joint_migration_routing_2, joint_placement_routing_3}.
\name~decouples this tradeoff: operators specify a target network load, and \name~computes the minimum number of migrations required to meet it.

\para{GPU migration.} Prior work on GPU migration falls into two categories.
The first focuses on efficient migration mechanisms: systems such as TrainMover enable live migration with minimal training disruption~\cite{trainmover, dcuda}.
\name~instead adopts a lightweight checkpoint-and-restore approach that buffers state in CPU memory rather than on disk; these techniques are orthogonal and would reduce migration overhead in \name.
The second targets GPU resource defragmentation and fine-grained allocation at the granularity of individual GPUs or MIG partitions~\cite{gandiva, defrag_gpu_scheduler}.
\name~targets large-scale training jobs that occupy entire servers, where contention manifests in the network rather than at the device level.

\para{Elastic training.}
Elastic training allows workers to dynamically join and leave a training job to improve fault tolerance and utilization~\cite{pipetransformer, parcae, elaswave}.
Such mechanisms complement \name's migrations by reducing the associated disruption. 
\section{Conclusion}
\label{sec:conclusion}
We presented \sys, a migration-based defragmentation system that mitigates network congestion in multi-tenant GPU clusters. Our evaluation shows that \sys improves average job completion time by $14$\% over the next-best baseline and keeps p99 slowdown to $5$\% at 16:1 oversubscription on $2,048$ GPUs, with a $9.02$~s migration overhead per worker via in-memory RDMA checkpoint-and-restore.

\section*{Acknowledgments}
We thank the anonymous shepherd for their insightful comments. Thank you to Hari Balakrishnan, Jeremy Carin, Om Chabra, Snehadeep Gayen, Sanjoli Narang, and Benny Rubin for their comments on earlier versions of this paper. This research was supported by NSF FMitF-2421734, NSF CAREER-2144766, NSF PPoSS-2217099, NSF CNS-2211382, Jane Street Group LLC, and Sloan fellowship FG-2022-18504.

\newpage

\label{bodypage}
\bibliographystyle{ACM-Reference-Format}
\bibliography{reference}

\appendix
\clearpage
\section*{Appendix}

\section{Casting \name's Optimization as an ILP}
\label{appendix:full_ilp}
In this section, we detail how to cast the core algorithmic problem of \name as an ILP.

The problem is about changing edge weights in a fully connected bipartite edge graph. $S$ is the set of left nodes, $T$ is the set of right nodes, and $E$ is the set of edges. For each fragmented job in the cluster, we add a vertex $s$ for it to $S$. For each ToR switch in the cluster, we add a vertex $t$ to $T$. We add an edge between each job-ToR pair and set its weight to the number of jobs that job $s$ has on ToR $t$.

The optimization objective is to shift these edge weights around minimally such that all constraints are met. Table~\ref{table:full_ilp} defines all decision variables and constants. Below we specify the full ILP.

\begin{table}[t]
    \centering
    \scriptsize
    \setlength{\tabcolsep}{8pt}
    \begin{tabular}{@{}c|ccc@{}}
    \toprule
    \textbf{Variable}       & \textbf{Type} & \textbf{Value Range} & \textbf{Explanation} \\ \midrule
    \textbf{$w(s, t)$} & Decision Variable & $\geq 0$ & Assignment of workers to ToRs.\\
    \textbf{$x(s, t)$} & Decision Variable & Binary & Job $s$ is present on ToR $t$. \\
    $d(s, t)$ & Decision Variable & $\geq 0$ & Deviation between $w(s, t)$ and $w_0(s, t)$.\\
    $y(s, t)$ & Decision Variable & Binary & Job $s$ is \textit{only} present on ToR $t$. \\
    $w_0(s, t)$ & Input & $\mathbb{Z}_{\geq 0}$ & Initial assignment of workers to ToRs. \\
    $R_s$ & Input & $\mathbb{Z}_{\geq 1}$ & Number of workers for job $s$. \\
    $\rho_s$ & Input & $\mathbb{Z}_{\geq 1}$ & Number of independent rings for job $s$ (fragmentation weight). \\
    $\beta_t$ & Input & $\mathbb{Z}_{\geq 1}$ & Capacity ToR $t$ has. Excludes non-fragmented jobs.\\
    $\gamma_t$ & Input & $\mathbb{Z}_{\geq 0}$ & Fragmentation threshold. \\
    $U_{s, t}$ & Input & $\mathbb{Z}_{\geq 1}$ & Big-$M$ upper bound on $w(s, t)$, $\min(R_s, \beta_t)$. \\
    \bottomrule
    \end{tabular}
    \caption{Decision variables \& inputs for the ILP.}
    \label{table:full_ilp}
\end{table}


\setcounter{equation}{0}
\begin{align}
\min_{w,\,x,\,d,\,y} \quad & \sum_{(s, t) \in S \times T} d(s, t) \label{eq:obj} \\
\text{s.t.} \quad & \sum_{t} w(s, t) = R_s & \forall s \in S \label{eq:rowsum} \\
& \sum_{s} w(s, t) \leq \beta_t & \forall t \in T \label{eq:capacity} \\
& w(s, t) \leq U_{s, t} \, x(s, t) & \forall (s, t) \in S \times T \label{eq:activation} \\
& \sum_{s} \rho_s \, x(s, t) - \sum_{s} \rho_s \, y(s, t) \leq \gamma_t & \forall t \in T \label{eq:frag} \\
& d(s, t) \geq w(s, t) - w_0(s, t) & \forall (s, t) \in S \times T \label{eq:dev1} \\
& d(s, t) \geq w_0(s, t) - w(s, t) & \forall (s, t) \in S \times T \label{eq:dev2} \\
& y(s, t) \leq x(s, t) & \forall (s, t) \in S \times T \label{eq:ylink} \\
& \sum_{t' \neq t} x(s, t') \leq (|T| - 1)\,(1 - y(s, t)) & \forall s \in S, \; \forall t \in T \label{eq:ysingle}
\end{align}

Equation 1 is the objective and represents minimizing the distance between the final and initial state. Equation 2 means that the total number of workers to a job must add up to that job's demand. Equation 3 enforces that the amount of workers assigned to each ToR does not exceed its capacity. Equation 4 sets $x(s, t)$. It must be set to $1$ for job $s$ to have any workers on ToR $t$. Equation 5 checks the fragmentation constraint. The first summation is the total amount of fragmentation from jobs present on the ToR, and the second summation subtracts any jobs that were defragmented during this process. Equations 6 and 7 ensure the $d(s, t)$ variable used in the objective measures the final and initial state difference. Equation 8 ensures that $y(s, t)$ can only be $1$ if $x(s, t)$ is. Equation 9 lets $y(s, t)$ be $1$ if the job $s$ is only located on ToR $t$, excluding all other ToRs $t'$. 

\end{document}